\newtheorem{assumption}{\bf Assumption}
\newtheorem{definition}{\bf Definition}
\newtheorem{theorem}{\bf Theorem}
\newtheorem{proposition}{\bf Proposition}
\newtheorem{lemma}{\bf Lemma}
\newtheorem{remark}{\bf Remark}
\title{A nonlinear model predictive control framework using reference generic terminal ingredients \\- extended version}
\author{Johannes K\"ohler$^1$, Matthias A. M\"uller$^2$, Frank Allg\"ower$^1$
\thanks{$^1$Johannes K\"ohler and Frank Allg\"ower are with the
Institute for Systems Theory and Automatic Control, University of Stuttgart,
70550 Stuttgart, Germany. (email:$\{$johannes.koehler, frank.allgower$\}$@ist.uni-stuttgart.de).}
\thanks{$^2$Matthias A. M\"uller is with the Institute of Automatic Control, Leibniz University Hannover, 30167 Hannover, Germany.
(email:mueller@irt.uni-hannover.de).}
\thanks{Johannes K\"ohler would like to thank the German Research Foundation (DFG) for financial support of the project within the International Research Training Group “Soft Tissue Robotics” (GRK 2198/1 - 277536708).} 
}
\begin{document}
\IEEEoverridecommandlockouts
\IEEEpubid{\begin{minipage}{\textwidth}\ \\[12pt] \\ \\
         \copyright 2020 IEEE.  Personal use of this material is  permitted.  Permission from IEEE must be obtained for all other uses, in  any current or future media, including reprinting/republishing this material for advertising or promotional purposes, creating new  collective works, for resale or redistribution to servers or lists, or  reuse of any copyrighted component of this work in other works.
     \end{minipage}}
\maketitle
\begin{abstract}
In this paper, we present a quasi infinite horizon nonlinear model predictive control (MPC) scheme for tracking of generic reference trajectories. 
This scheme is applicable to nonlinear systems, which are locally incrementally stabilizable.  
For such systems, we provide a reference generic offline procedure to compute an incrementally stabilizing feedback with a continuously parameterized quadratic quasi infinite horizon terminal cost. 
As a result we get a nonlinear reference tracking MPC scheme with a valid terminal cost for general reachable reference trajectories without increasing the online computational complexity. 
As a corollary, the terminal cost can also be used to design nonlinear MPC schemes that reliably operate under online changing conditions, including unreachable reference signals.  
The practicality of this approach is demonstrated with a benchmark example.

This paper is an extended version of the accepted paper~\cite{JK_QINF}, and contains additional details regarding 
 \textit{robust} trajectory tracking (App.~\ref{sec:app_robust}), continuous-time dynamics (App.~\ref{sec:app_cont}),  output tracking stage costs (App.~\ref{sec:app_output}) and the connection to  incremental system properties (App.~\ref{sec:app_increm}). 
\end{abstract}
\begin{IEEEkeywords}
Nonlinear model predictive control, Constrained control, Reference tracking, Incremental Stability
\end{IEEEkeywords}
\section{Introduction}
Model Predictive Control (MPC)~\cite{rawlings2017model} is a well established control method, that computes the control input by repeatedly solving an optimization problem online.
The main advantages of MPC are the ability to cope with general nonlinear dynamics, hard state and input constraints, and the inclusion of performance criteria. 
In MPC (theory), recursive feasibility and closed-loop stability of a desirable setpoint are usually ensured by including suitable terminal ingredients (terminal set and terminal cost) in the optimization problem~\cite{mayne2000constrained}. 

In many applications, the control goal goes beyond the stabilization of a pre-determined setpoint. 
These practical challenges include tracking of changing reference setpoints, stabilization of dynamic trajectories, output regulation and general economic optimal operation. 
There exist many promising ideas to tackle these issues in MPC, for example by simultaneously optimizing an artificial reference~\cite{limon2008mpc,limon2016mpc,limon2018nonlinear,fagiano2013generalized,muller2013economic,muller2014performance,ferramosca2014economic}. 
However, most of these approaches are limited in some form to linear systems and/or setpoint stabilization. 
The computation of suitable terminal ingredients seems to be a bottleneck for the practical extension of these methods to nonlinear systems and dynamic trajectories. 
We bridge this gap, by providing a reference \textit{generic} offline computation for the terminal ingredients.  
Thus, we can provide practical schemes for nonlinear systems subject to changing operating conditions.  

\IEEEpubidadjcol

\subsection*{Related work}
For linear stabilizable systems, a terminal set and terminal cost can be computed based on the linear quadratic regulator (LQR) and the maximal output admissible set~\cite{gilbert1991linear}. 
For the purposes of stabilizing a given setpoint, a suitable design procedure for nonlinear systems with a stabilizable linearization has been provided in~\cite{chen1998quasi,rawlings2017model}. 

In practice, the setpoint to be stabilized can change and thus procedures independent of the setpoint are necessary. 
In~\cite{findeisen2000nonlinear}, the issue of finding a setpoint independent terminal cost has been investigated based on the concept of pseudo linearizations. 
While in principle very appealing, the computation of such a pseudo linearization for general nonlinear systems seems unpractical.
In~\cite{magni2005solution}, a locally stabilizing controller is assumed and the terminal cost and constraints are defined implicitly based on the infinite horizon tail cost.  
The main drawback of this method is the implicit description of the terminal cost, which can significantly increase the online computational demand. 
In~\cite{limon2018nonlinear} the feasible setpoints are partitioned into disjoint sets and for each such set a fixed stabilizing controller and terminal cost are designed using the methods in~\cite{wan2003efficient,wan2003offline} based on a local linear time-varying (LTV) system description.  
This method is mainly limited to systems with a one dimensional steady-state manifold, due to the otherwise complex and difficult partitioning. 
In addition, the piece-wise definition can also lead to numerical difficulties since the terminal cost is not differentiable with respect to the setpoint.  
 
There are many applications in which we want to stabilize some dynamic trajectory or periodic orbit. 
The nonlinear system along this trajectory can be locally approximated with an LTV system. 
In~\cite{faulwasser2011model}, this is used to compute a (time-varying) terminal cost for asymptotically constant trajectories. 
In~\cite{aydiner2016periodic} periodic trajectories are considered and a (periodic) terminal cost is computed based on linear matrix inequalities (LMIs).
A significant practical restriction for these methods is the fact that the offline computation is accomplished for a \textit{specific} (a priori known) trajectory.

In general, the existing procedures to compute terminal ingredients for MPC are mainly focused on computing a terminal cost for a \textit{specific} reference point or reference trajectory.
Thus, online changes in the setpoint or trajectory cannot be handled directly and necessitate repeated offline computations.

\subsection*{Contribution}
In this work, we provide a reference \textit{generic} offline procedure to compute a parameterized terminal cost. 
This procedure is applicable to both setpoint or trajectory stabilization. 
The feasibility of this approach requires local incremental stabilizability of the nonlinear dynamics. 
The existing design procedures~\cite{chen1998quasi,faulwasser2011model,aydiner2016periodic} use the linearization around the considered setpoint or trajectory to locally establish properties of the nonlinear systems.  
In a similar spirit, we consider the linearization of the nonlinear system dynamics around all possible points in the constraint set and describe the dynamics analogous to quasi-linear parameter-varying (LPV) systems. 
With this description, we formulate the desired properties on the linearized dynamics and provide suitable LMIs to compute the parameter dependent terminal cost and controller. 
In closed-loop operation we have a quadratic terminal cost with an ellipsoidal terminal constraint directly available.  
This provides a generalization of the offline computations in~\cite{chen1998quasi,faulwasser2011model,aydiner2016periodic} to \textit{generic} references. 
We employ the proposed method in an evasive maneuver test for a car and show that the design of suitable reference generic terminal ingredients can significantly improve the control performance compared to MPC schemes with terminal equality constraints or without terminal constraints.

Given these terminal ingredients, we can extend existing tracking MPC schemes, such as~\cite{limon2008mpc,limon2016mpc,limon2018nonlinear,fagiano2013generalized,muller2013economic,muller2014performance,ferramosca2014economic} to nonlinear system dynamics and optimal periodic operation, which is a fundamental step towards practical nonlinear MPC schemes.
In particular, we provide a nonlinear periodic tracking MPC scheme for exogenous output signals as an extension to~\cite{limon2008mpc,limon2016mpc,limon2018nonlinear}.

\subsection*{Outline}
The remainder of this paper is structured as follows: 
Section~\ref{sec:MPC} presents the reference tracking MPC scheme based on the proposed parameterized terminal ingredients. 
Section~\ref{sec:loc_stab} provides a constructive procedure to design parametric terminal ingredients independent of the considered reference. 
Section~\ref{sec:ext} shows how the resulting parameterized terminal ingredients can be used to extend existing  MPC schemes for changing operation conditions to nonlinear system dynamics and periodic operation. 
Section~\ref{sec:num} shows the practicality of this procedure with numerical examples.
Section~\ref{sec:sum} concludes the paper.  
In the appendix, these results are extended to \textit{robust} trajectory tracking (App.~\ref{sec:app_robust}), continuous-time dynamics (App.~\ref{sec:app_cont}), and output tracking stage costs (App.~\ref{sec:app_output}). 
In addition, the connection between the generic terminal ingredients and incremental system properties is discussed (App.~\ref{sec:app_increm}). 
\section{Reference tracking model predictive control}
\label{sec:MPC}
\subsection{Notation}
The quadratic norm with respect to a positive definite matrix $Q=Q^\top$ is denoted by $\|x\|_Q^2=x^\top Q x$.
The minimal and maximal eigenvalue of a symmetric matrix $Q=Q^\top$ is denoted by $\lambda_{\min}(Q)$ and $\lambda_{\max}(Q)$, respectively. 
The identity matrix is $I_n\in\mathbb{R}^{n\times n}$. 
The interior of a set $\mathcal{X}$ is denoted by $\text{int}(\mathcal{X})$.
The vertices of a polytopic set $\Theta$ are denoted by $\theta_i\in\text{Vert}(\Theta)$.  
\subsection{Setup}
We consider the following nonlinear discrete-time system
\begin{align}
\label{eq:sys}
x(t+1)&=f(x(t),u(t))
\end{align}
 with the state $x\in\mathbb{R}^n$, control input $u\in\mathbb{R}^m$, and time step $t\in\mathbb{N}$. 
The extension of the following derivation to continuous-time dynamics is detailed in Appendix~\ref{sec:app_cont}. 
We impose point-wise in time constraints on the state and input
\begin{align}
\label{eq:constraint}
(x(t),u(t))\in \mathcal{Z}, 
\end{align}
with some compact\footnote{%
The derivations can be extended to time-varying constraint sets $\mathcal{Z}(t)$ and dynamics $f(x,u,t)$.
The consideration of non-compact constraint sets may require additional uniformity conditions on the nonlinear dynamics. 
}   set $\mathcal{Z}$. 
We consider the following assumption regarding the reference signal $r=(x_r,u_r)\in\mathbb{R}^{n+m}.$
\begin{assumption}
\label{ass:ref}
The reference signal $r$ satisfies $r(t)\in\mathcal{Z}_r$, $\forall t\geq 0$, with some set $\mathcal{Z}_r\subseteq\text{int}(\mathcal{Z})$. 
Furthermore, the evolution of the reference signal is restricted by $r(t+1)\in\mathcal{R}(r(t))$, with $\mathcal{R}(r)=\{(x_r^+,u_r^+)\in\mathcal{Z}_r|~x_r^+=f(x_r,u_r)\}$. 
\end{assumption}
This assumption characterizes that the reference trajectory $r$ is reachable, i.e., follows the dynamics $f$ and lies (strictly) in the constraint set $\mathcal{Z}$. 
If the reference trajectory is not reachable it is possible to enforce these constraints on an artificial reference trajectory which can be included in the MPC optimization problem, compare Section~\ref{sec:ext}.  
\begin{remark}
\label{rk:ref}
The set $\mathcal{R}(r)$ can be modified to incorporate additional incremental input constraints $\|u_r(t+1)-u_r(t)\|_{\infty}\leq \epsilon$. 
Setpoints are included as a special case, with $\mathcal{R}(r)=r$ and the steady-state manifold $\mathcal{Z}_r$. 
\end{remark}

\subsection{Terminal cost and terminal set}
Denote the tracking error by $e_r(t)=x(t)-x_r(t)$. 
The control goal is to stabilize the tracking error $e_r(t)=0$ and achieve constraint satisfaction $(x(t),u(t))\in\mathcal{Z}$, $\forall t\geq 0$.   
To this end we define the quadratic reference tracking stage cost 
\begin{align}
\label{eq:stage}
\ell(x,u,r)=\|x-x_{r}\|_Q^2+\|u-u_{r}\|_R^2,
\end{align} 
with positive definite weighting matrices $Q,~R$.
\begin{remark}
\label{remark:output_cost}
The extension to an tracking stage cost $\ell(x,u,r)=\|h(x,u)-h(x_r,u_r)\|_{S(r)}^2$ with some output $y=h(x,u)$ and a positive definite weighting matrix $S$ is discussed in the Appendix~\ref{sec:app_output}. 
\end{remark}
As discussed in the introduction, we need suitable terminal ingredients to ensure stability and recursive feasibility for the closed-loop system. 
\begin{assumption}
\label{ass:term} 
There exist matrices $K_f(r)\in\mathbb{R}^{m\times n}$, $P_f(r)\in\mathbb{R}^{n\times n}$ with $c_l I_n\leq P_f(r)\leq c_u I_n$, a terminal set $\mathcal{X}_f(r)=\{x\in\mathbb{R}^n|~V_f(x,r)\leq \alpha\}$ with the terminal cost $V_f(x,r)=\|x-x_r\|_{P_f(r)}^2$, such that the following properties hold for any $r\in\mathcal{Z}_r$, any $x\in\mathcal{X}_f(r)$ and any $r^+\in\mathcal{R}(r)$
\begin{subequations}
\label{eq:term}
\begin{align}
\label{eq:term_dec}
V_f(x^+,r^+)\leq& V_f(x,r)-\ell(x,k_f(x,r),r),\\
\label{eq:term_con}
(x,k_f(x,r))\in&\mathcal{Z},
\end{align}
\end{subequations}
with $x^+=f(x,k_f(x,r))$, $k_f(x,r)=u_r+K_f(r)\cdot (x-x_r)$ and positive constants $c_l,~c_u,~\alpha$. 
\end{assumption}
For $r=r^+=0$ this reduces to the standard conditions in~\cite{chen1998quasi}. 
For a given trajectory $r$, this implies time-varying terminal ingredients, compare~\cite{faulwasser2011model,aydiner2016periodic}. 
Designing suitable\footnote{%
In principle, this assumption can always be satisfied with a terminal equality constraint $\mathcal{X}_f(r)=x_r$.   
However, this can lead to numerical problems, and decrease performance and robustness of the MPC scheme. 
In addition, tracking schemes such as~\cite{limon2008mpc,limon2018nonlinear,kohler2018mpc}, typically require a non-vanishing terminal set size $\alpha$ to ensure exponential stability, compare Section~\ref{sec:ext}. 
} terminal ingredients that satisfy this assumption is the main contribution of this paper and is discussed in more detail in the Section~\ref{sec:loc_stab}. 

\begin{remark}
\label{rk:invarset}
Assumption~\ref{ass:ref} implies that the reference $r(t)$ is contained within a control invariant subset $\mathcal{Z}_{\infty}\subseteq\mathcal{Z}_r$. 
Thus, Assumption~\ref{ass:term} could be relaxed, such that the conditions~\eqref{eq:term} only need to be satisfied for points $r\in\mathcal{Z}_{\infty}$. 
The exact characterization of the set $\mathcal{Z}_{\infty}$ is, however, challenging and thus we consider the stricter\footnote{If there exists a fixed constant $T_0$, such that $r(t+k)\in\mathcal{Z}_r,~\forall k\in[0,T_0]$, implies $r(t)\in\mathcal{Z}_{\infty}$, then the conditions in Assumption~\ref{ass:term} are not stricter. 
However, if we use a convex overapproximation (Prop.~\ref{prop:LMI_lpv}) and/or parameterize the matrices $P_f,~K_f$, then this may introduce additional conservatism.  
} conditions as formulated in Assumption~\ref{ass:term}. 
\end{remark}
\subsection{Preliminary results}
Denote the reference $r$ over the prediction horizon $N$ by  ${r}(\cdot|t)\in\mathbb{R}^{(n+m)\times (N+1)}$ with $r(k|t)=r(t+k)$, $k=0,\dots,N$. 
Given a predicted state and input sequence $x(\cdot|t)\in\mathbb{R}^{n\times N+1 },~u(\cdot|t)\in\mathbb{R}^{m\times N}$ the tracking cost with respect to the reference $r(\cdot|t)$ is given by
\begin{align*}
J_N(x(\cdot|t),u(\cdot|t),r(\cdot|t)):=&\sum_{k=0}^{N-1}\ell(x(k|t),u(k|t),r(k|t))\\
&+V_f(x(N|),r(N|t)).
\end{align*}
The MPC scheme is based on the following (standard) MPC optimization problem
\begin{subequations}
\label{eq:MPC}
\begin{align}
\label{eq:MPC_cost}
V(x(t),r(\cdot|t))=\min_{u(\cdot|t)}&J_N(x(\cdot|t),u(\cdot|t),r(\cdot|t))\\
\label{eq:MPC_dyn}
\text{s.t. }&x(k+1|t)=f(x(k|t),u(k|t)),\\
\label{eq:MPC_init}
&x(0|t)=x(t),\\
\label{eq:MPC_con}
&(x(k|t),u(k|t))\in\mathcal{Z},\\
\label{eq:MPC_term}
&x(N|t)\in\mathcal{X}_f({r}(N|t)).
\end{align}
\end{subequations}
The solution to this optimization problem are the value function $V$ and the optimal input trajectory $u^*(\cdot|t)$. 
In closed-loop operation we apply the first part of the optimized input trajectory to the system, leading to the following closed loop  
\begin{align}
\label{eq:close}
x(t+1)=f(x(t),u^*(0|t))=x^*(1|t),\quad t\geq 0.
\end{align}
The following theorem summarizes the standard theoretical properties of the closed-loop system~\eqref{eq:close}. 
\begin{theorem}
\label{thm:MPC} 
Let Assumptions~\ref{ass:ref} and \ref{ass:term} hold. 
Assume that Problem~\eqref{eq:MPC} is feasible at $t=0$. 
Then Problem~\eqref{eq:MPC} is recursively feasible and the tracking error $e_r=0$ is (uniformly) exponentially stable for the resulting closed-loop system~\eqref{eq:close}. 
\end{theorem}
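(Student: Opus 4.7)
The plan is to follow the classical MPC stability proof of Mayne et al., adapted to the time-varying reference. The main ingredients are a candidate input construction, recursive feasibility via Assumption~\ref{ass:term}, and a Lyapunov argument using $V$ as a Lyapunov function for the tracking error.

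First I would establish recursive feasibility. Given the optimal solution $u^*(\cdot|t)$ at time $t$, I would construct the standard candidate input at $t+1$ by shifting and appending the terminal controller: set $u(k|t+1)=u^*(k+1|t)$ for $k=0,\dots,N-2$, and $u(N-1|t+1)=k_f(x^*(N|t),r(N|t))$. The corresponding state trajectory coincides with the shifted optimal one up to $k=N-1$, and $x(N|t+1)=f(x^*(N|t),k_f(x^*(N|t),r(N|t)))$. By Assumption~\ref{ass:ref}, $r(N|t+1)\in\mathcal{R}(r(N|t))$, so that Assumption~\ref{ass:term} applies: the constraint~\eqref{eq:term_con} ensures $(x(N-1|t+1),u(N-1|t+1))\in\mathcal{Z}$, and~\eqref{eq:term_dec} gives $V_f(x(N|t+1),r(N|t+1))\leq V_f(x^*(N|t),r(N|t))\leq\alpha$, so the terminal constraint~\eqref{eq:MPC_term} is satisfied. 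Feasibility of the remaining constraints follows directly from feasibility at $t$.

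Next I would establish the descent of the value function. Plugging the candidate into the cost, telescoping the stage costs, and using the terminal cost decrease~\eqref{eq:term_dec} to absorb the last stage cost into the drop of $V_f$, yields
\begin{align*}
V(x(t+1),r(\cdot|t+1))&\leq J_N(x(\cdot|t+1),u(\cdot|t+1),r(\cdot|t+1))\\
&\leq V(x(t),r(\cdot|t))-\ell(x(t),u^*(0|t),r(t)).
\end{align*}
Since $\ell(x,u,r)\geq \lambda_{\min}(Q)\|x-x_r\|^2$, this supplies both a descent estimate in $\|e_r\|^2$ and, by induction, recursive feasibility of~\eqref{eq:MPC} for all $t\geq 0$.

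For exponential stability I would use $V$ as a Lyapunov function for $e_r$. The lower bound $V(x,r)\geq \lambda_{\min}(Q)\|e_r\|^2$ follows from the stage cost. For the upper bound I would show that whenever $x\in\mathcal{X}_f(r)$, applying the terminal controller $k_f$ over the entire horizon yields a feasible input (by iterating Assumption~\ref{ass:term}) with total cost bounded by $V_f(x,r)\leq c_u\|e_r\|^2$; hence $V(x,r)\leq c_u\|e_r\|^2$ on $\mathcal{X}_f$. Combined with the descent, this gives exponential decay of $V$, and thus of $\|e_r\|^2$, inside the terminal region. The summability $\sum_{t\geq 0}\|e_r(t)\|^2\leq V(x(0),r(\cdot|0))/\lambda_{\min}(Q)$ forces the closed-loop trajectory to eventually enter $\mathcal{X}_f(r(t))$, after which the quadratic Lyapunov bounds hold and uniform exponential stability follows; a standard comparison argument then transfers the exponential rate back to the whole feasible set.

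The main obstacle I would anticipate is the quadratic upper bound on $V$. The lower bound and descent are immediate from $\ell$, but the upper bound requires the terminal ingredients to provide a locally quadratic over‑approximation of $V$ that is uniform in the reference $r\in\mathcal{Z}_r$; this is exactly where the uniform bound $P_f(r)\leq c_u I_n$ and the invariance property~\eqref{eq:term} of Assumption~\ref{ass:term} are essential, and the uniformity over $r$ is what promotes asymptotic stability to the claimed uniform exponential stability.
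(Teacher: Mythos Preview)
Your proof is correct and follows the same overall route as the paper: the shifted candidate with appended terminal controller, recursive feasibility via Assumption~\ref{ass:term}, and the value-function decrease~\eqref{eq:V} are exactly what the paper does.

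The only noteworthy difference is how you obtain the quadratic upper bound on $V$. The paper takes the direct route: compactness of $\mathcal{Z}$ together with the local bound $V\leq V_f\leq c_u\|e_r\|^2$ on $\mathcal{X}_f$ immediately yields a global bound $V\leq c_v\|e_r\|_Q^2$ on the whole feasible set (outside $\mathcal{X}_f$ one has $\|e_r\|^2\geq \alpha/c_u>0$ while $V$ is uniformly bounded), after which uniform exponential stability is a one-line Lyapunov argument. Your two-stage route (summability $\Rightarrow$ finite-time entry into $\mathcal{X}_f$ $\Rightarrow$ local exponential decay $\Rightarrow$ ``comparison argument'') is valid but more circuitous, and the unspecified comparison step in fact needs the same compactness ingredient to relate $V(x(0),r(\cdot|0))$ back to $\|e_r(0)\|^2$; otherwise you only get asymptotic stability, not the uniform exponential rate. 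So nothing is wrong, but the paper's shortcut via compactness is what you should use to close the last step cleanly.
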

\begin{proof}
This theorem is a straight forward extension of standard MPC results in~\cite{rawlings2009model}, compare also~\cite{faulwasser2011model}. 
Given the optimal solution $u^*(\cdot|t)$, the candidate sequence 
\begin{align}
\label{eq:candidate_input}
u(k|t+1)=   \begin{cases}
u^*(k+1|t)& k\leq N-2  \\
k_f(x^*(N|t),r(N|t))&k=N-1
\end{cases},
\end{align}
 is a feasible solution to~\eqref{eq:MPC_cost} and implies 
\begin{align}
\label{eq:V}
V(x(t+1),r(\cdot|t+1))\leq V(x(t),r(\cdot|t))-\ell(x(t),u(t),r(t)).
\end{align}
Compact constraints in combination with the quadratic terminal cost imply 
\begin{align*}
\|x(t)-x_r(t)\|_Q^2\leq V(x(t),r(\cdot|t))\leq c_v \|x(t)-x_r(t)\|_Q^2,
\end{align*}
for some $c_v\geq 1$. 
Uniform exponential stability follows from standard Lyapunov arguments using the value function $V$.
\end{proof}
This theorem shows that if we can design suitable terminal ingredients (Ass.~\ref{ass:term}), the closed-loop tracking MPC has all the (standard) desirable properties. 
In Section~\ref{sec:ext} we discuss how this can be extended to more general tracking problems. 
This scheme can be easily modified to ensure robust reference tracking using the method in~\cite{kohler2018novel}, for details see Appendix~\ref{sec:app_robust} and the numerical example in Section~\ref{sec:num}. 
\begin{remark}
\label{rk:withoutterm}
A powerful alternative to the proposed quasi-infinite horizon reference tracking MPC scheme would be a reference tracking MPC scheme without terminal ingredients~\cite{kohlernonlinear19} ($V_f(x,r)=0$, $\mathcal{X}_f(r)=\mathcal{X}$). 
If it is possible to design terminal ingredients (Ass.~\ref{ass:term}), the value function of such an MPC scheme without terminal constraints is locally bounded by $V(x(t),r(\cdot|t))\leq \gamma \ell(x,u,r)$, with a suitable constant $\gamma$, compare~\cite[Prop.~2]{kohlernonlinear19}. 
Thus, an MPC scheme without terminal constraints enjoys similar closed-loop properties to Theorem~\ref{thm:MPC}, provided a sufficiently large prediction horizon $N$ is used, compare~\cite[Thm.~2]{kohlernonlinear19}. 
One of the core advantages of including suitably designed terminal ingredients is that we can implement the MPC scheme with a short prediction horizon $N$. 
On the other hand, if the reference is not reachable (Ass.~\ref{ass:ref}), MPC schemes without terminal constraints can still be successfully applied~\cite[Thm.~4]{kohlernonlinear19}, which is in general not the case for MPC schemes with terminal constraints.  
\end{remark}

\section{Reference generic offline computations}
\label{sec:loc_stab}
This section provides a reference  \textit{generic} offline computation to design terminal ingredients for nonlinear reference tracking MPC. 
In Lemma~\ref{lemma:lpv} we provide sufficient conditions for the terminal ingredients based on properties of the linearization. 
Then, two approaches based on LMI computations are described to compute the terminal ingredients, based on Lemma~\ref{lemma:LMI} and Proposition~\ref{prop:LMI_lpv}. 
After that, a procedure to obtain a non conservative terminal set size $\alpha$ is discussed.  
Finally, the overall offline procedure is summarized in Algorithm~\ref{alg:offline}. 
For the special case of setpoint tracking, existing methods are discussed in relation to the proposed procedure. 
In Appendix~\ref{sec:app_cont} and \ref{sec:app_output}, these results are extended to continuous-time dynamics and output tracking stage costs, respectively. 
%
\subsection{Sufficient conditions based on the linearization}
We denote the Jacobian of $f$ evaluated around an arbitrary point $r\in\mathcal{Z}_r$ by
\begin{align}
\label{eq:A_r}
A(r)=\left.\left[\dfrac{\partial f}{\partial x}\right]\right|_{(x,u)=r},\quad B(r)=\left.\left[\dfrac{\partial f}{\partial u}\right]\right|_{(x,u)=r}. 
\end{align}
The following lemma establishes local incremental properties of the nonlinear system dynamics based on the linearization. 
\begin{lemma}
\label{lemma:lpv}
Suppose that $f$ is twice continuously differentiable. 
Assume that there exists a matrix $K_f(r)\in\mathbb{R}^{m\times n}$ and a positive definite matrix $P_f(r)\in\mathbb{R}^{n\times n}$ continuous in $r$, such that for any  $r\in\mathcal{Z}_r$, $r^+\in\mathcal{R}(r)$, the following matrix inequality is satisfied
\begin{align}
\label{eq:lpv}
&(A(r)+B(r)K_f(r))^\top P_f(r^+)(A(r)+B(r)K_f(r))\\
\leq& P_f(r)-(Q+K_f(r)^\top R K_f(r))-\epsilon I_n\nonumber
\end{align}
 with some positive constant $\epsilon$. 
Then there exists a sufficiently small constant $\alpha$, such that $P_f,~K_f$ satisfy Assumption~\ref{ass:term}.  
\end{lemma}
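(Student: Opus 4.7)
The plan is to verify the three components of Assumption~\ref{ass:term} (bounds on $P_f$, the decrease~\eqref{eq:term_dec}, and the constraint inclusion~\eqref{eq:term_con}) by exploiting the fact that the matrix inequality~\eqref{eq:lpv} already encodes the linearized version of the desired decrease, and then controlling the nonlinear remainder via a Taylor expansion. Compactness of $\mathcal{Z}_r$ together with continuity of $P_f$ and $K_f$ will be the main tool used to turn pointwise properties into uniform ones.

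First, I would use that $\mathcal{Z}_r$ is compact (as a subset of the compact $\mathcal{Z}$) and that $P_f$ is continuous and pointwise positive definite, to extract uniform constants $c_l,c_u>0$ with $c_l I_n\leq P_f(r)\leq c_u I_n$ for all $r\in\mathcal{Z}_r$. In particular, $V_f(x,r)\leq\alpha$ implies $\|x-x_r\|^2\leq\alpha/c_l$, so prescribing a small $\alpha$ translates into a uniform bound on the tracking error $e:=x-x_r$.

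Next, I would prove~\eqref{eq:term_dec}. Writing $e^+:=x^+-x_r^+$ and using that $f$ is $C^2$ together with $x_r^+=f(x_r,u_r)$, a Taylor expansion of $f$ around $r=(x_r,u_r)$ along the increment $(e,K_f(r)e)$ yields
\begin{align*}
e^+=(A(r)+B(r)K_f(r))e+\phi(e,r),
\end{align*}
with a remainder satisfying $\|\phi(e,r)\|\leq c_\phi\|e\|^2$ for some $c_\phi>0$, uniformly in $r\in\mathcal{Z}_r$ (by compactness of $\mathcal{Z}_r$ and continuity of the second derivatives of $f$ and of $K_f$). Expanding $V_f(x^+,r^+)=\|e^+\|_{P_f(r^+)}^2$ and invoking~\eqref{eq:lpv} on the leading term gives
\begin{align*}
V_f(x^+,r^+)\leq V_f(x,r)-\ell(x,k_f(x,r),r)-\epsilon\|e\|^2+\rho(e,r),
\end{align*}
where $|\rho(e,r)|\leq c_\rho\|e\|^3$ uniformly on $\mathcal{Z}_r$, again using the uniform bounds on $P_f$, $K_f$, and $\phi$. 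Choosing $\alpha$ small enough that $c_\rho\sqrt{\alpha/c_l}\leq\epsilon$ absorbs the cubic remainder into the $-\epsilon\|e\|^2$ term and establishes~\eqref{eq:term_dec}.

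Finally, the constraint condition~\eqref{eq:term_con} follows from $\mathcal{Z}_r\subseteq\mathrm{int}(\mathcal{Z})$: at $x=x_r$ one has $(x_r,k_f(x_r,r))=r\in\mathrm{int}(\mathcal{Z})$, and by continuity of $k_f$, compactness of $\mathcal{Z}_r$, and the uniform bound $\|x-x_r\|\leq\sqrt{\alpha/c_l}$, a sufficiently small $\alpha$ guarantees $(x,k_f(x,r))\in\mathcal{Z}$ for every $r\in\mathcal{Z}_r$ and every $x\in\mathcal{X}_f(r)$. Taking $\alpha$ to be the minimum of the two bounds obtained above concludes the proof. The main subtlety, and the only place where care is required, is making the Taylor remainder bound uniform in $r$; this is exactly where the twice continuous differentiability of $f$ and the compactness of $\mathcal{Z}_r$ enter in an essential way.
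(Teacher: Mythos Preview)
Your proposal is correct and follows essentially the same structure as the paper's proof: Part~I establishes~\eqref{eq:term_dec} via a Taylor expansion of $f$ around $r$, invokes~\eqref{eq:lpv} on the quadratic part, and absorbs the remainder into the $-\epsilon\|e\|^2$ slack; Part~II obtains~\eqref{eq:term_con} from $\mathcal{Z}_r\subseteq\mathrm{int}(\mathcal{Z})$ together with the uniform bound $\|e\|\leq\sqrt{\alpha/c_l}$. The only notable difference is how the remainder is controlled: the paper first derives a \emph{sufficient} local Lipschitz bound $L_\Phi^*$ on $\Phi_r$ and then determines $\alpha_1$ so that this bound holds, whereas you work directly with the quadratic estimate $\|\phi(e,r)\|\leq c_\phi\|e\|^2$ and a cubic residual $c_\rho\|e\|^3$. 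The paper itself points to your variant in a footnote as an equivalent alternative, so this is a stylistic rather than a substantive difference.
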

\begin{proof}
The proof is very much in line with the result for setpoints in~\cite{chen1998quasi,rawlings2017model}. 
First we show  satisfaction of the decrease condition~\eqref{eq:term_dec} and then constraint satisfaction~\eqref{eq:term_con}. \\
\textbf{Part I:} 
Denote $\Delta x:=x-x_r$ and $\Delta u:=K_f(r)\Delta x$.  
Using a first order Taylor approximation at $r=(x_r,u_r)$, we get
\begin{align*}
f(x,k_f(x,r))={f(x_r,u_r)}+A(r)\Delta x+B(r)\Delta u+\Phi_r(\Delta x),
\end{align*}
with the remainder term $\Phi_r$. 
The terminal cost satisfies
\begin{align}
\label{eq:lpv_1}
 &V_f(x^+,r^+)=\|f(x,u)-f(x_r,u_r)\|_{P_f(r^+)}^2\nonumber\\
=&\|(A(r)+B(r)K_f(r))\Delta x+\Phi_r(\Delta x)\|_{P_f(r^+)}^2\nonumber\\
\leq&\|(A(r)+B(r)K_f(r))\Delta x\|_{P_f(r^+)}^2\nonumber
+\|\Phi_r(\Delta x)\|_{P_f(r^+)}^2\nonumber\\
&+2\|\Phi_r(\Delta x)\|_{P_f(r^+)}\|(A(r)+B(r)K_f(r))\Delta x\|_{P_f(r^+)}\nonumber\\
\stackrel{\eqref{eq:lpv}}{\leq}&V_f(x,r)-\epsilon\|\Delta x\|^2-\ell(x,k_f(x,r),r)+\|\Phi_r(\Delta x)\|_{P_f(r^+)}^2\nonumber\\
&+2\|\Phi_r(\Delta x)\|_{P_f(r^+)}{\|(A(r)+B(r)K_f(r))\Delta x\|_{P_f(r^+)}}.
\end{align}
Using the continuity of $P_f(r),~K_f(r)$ and the compactness of the constraint set $\mathcal{Z}_r$, there exist finite constants
\begin{align}
\label{eq:c_u}
&c_{u}=\max_{r\in\mathcal{Z}_r}\lambda_{\max}(P_f(r)),\quad  c_{l}=\min_{r\in\mathcal{Z}_r}\lambda_{\min}(P_f(r)),\\
\label{eq:k_u}
&k_{u}=\max_{r\in\mathcal{Z}_r}\|K_f(r)\|,\\
&c_{u,2}=\max_{r\in\mathcal{Z}_r}\lambda_{\max}(P_f(r)-(\epsilon I+Q+K_f(r)^\top RK_f(r))  ). \nonumber
\end{align}
Suppose that the remainder term $\Phi_r$ is locally Lipschitz\footnote{
In line with existing procedures~\cite{chen1998quasi}, we first deriving a sufficient local Lipschitz bound $L_{\Phi}^*$ and then obtain a local region $\alpha_1$~\eqref{eq:alpha_1}. 
Alternatively, it is possible to directly use the quadratic bound $\|\Phi_r(\Delta x)\|\leq c\|\Delta x\|^2$ and work with higher order terms to obtain $\alpha_1$, compare~\cite[Prop.~1]{kohlernonlinear19}. 
} continuous in the terminal set with a constant $L_{\Phi,\alpha}$ satisfying
\begin{align}
\|\Phi_r(\Delta x)\|\leq L_{\Phi,\alpha}\|\Delta x\|,\nonumber\\
\label{eq:lpv_2}
L_{\Phi,\alpha}\leq L_{\Phi}^*:=\sqrt{\dfrac{c_{u,2}+\epsilon}{c_u}}-\sqrt{\dfrac{c_{u,2}}{c_u}}.
\end{align}
Then we have
\begin{align*}
&~~\|\Phi_r(\Delta x)\|_{P_f(r^+)}^2\\
&~~+2\|\Phi_r(\Delta x)\|_{P_f(r^+)}\|(A(r)+B(r)K_f(r))\Delta x\|_{P_f(r^+)}\\
&\stackrel{\eqref{eq:lpv}\eqref{eq:c_u}\eqref{eq:lpv_2}}{\leq} \left(L_{\Phi,\alpha}^2c_u+2L_{\Phi,\alpha} \sqrt{c_u}\sqrt{c_{u,2}}\right)\|\Delta x\|^2\\
&~=~\left(c_u\left(L_{\Phi,\alpha}+\sqrt{\frac{c_{u,2}}{c_u}}\right)^2-c_{u,2}\right)\|\Delta x\|^2\stackrel{\eqref{eq:lpv_2}}{\leq}\epsilon \|\Delta x\|^2,
\end{align*}
which in combination with~\eqref{eq:lpv_1} implies the desired inequality~\eqref{eq:term_dec}. 
Twice continuous differentiability of $f$ in combination with compactness of $\mathcal{Z}$ implies that there exists some constant $T$ with
\begin{align*}
\|\Phi_r(\Delta x)\|\leq T\left(\|\Delta x\|^2+\|\Delta u\|^2\right)\stackrel{\eqref{eq:k_u}}{\leq} T(1+k_u^2)\|\Delta x\|^2. 
\end{align*}
Using  $\|\Delta x\|\leq \sqrt{\frac{\alpha}{c_l}}$ from the terminal constraint, we get~\eqref{eq:lpv_2} for all $\alpha\leq \alpha_1$ with 
\begin{align}
\label{eq:alpha_1}
\alpha_1:=c_l\left(\dfrac{L_{\Phi}^*}{T(1+k_u^2)}\right)^2.
\end{align}
\textbf{Part II:} Constraint satisfaction: 
The terminal constraint $\|\Delta x\|_{P_f(r)}^2\leq \alpha$ in combination with~\eqref{eq:c_u},~\eqref{eq:k_u} implies  
\begin{align*}
(\Delta x,~\Delta u)\in\mathcal{B}(\alpha)=\left\{z\in\mathbb{R}^{n+m}|~\|z\|^2\leq \frac{\alpha}{c_l}\left(1+k_u^2\right)\right\}.
\end{align*}
Given $\mathcal{Z}_r\subseteq\text{Int}(\mathcal{Z})$, there exists a small enough $\alpha_2$ such that 
\begin{align}
\label{eq:alpha_2}
(x,u)=r+(\Delta x,\Delta u)\subseteq \mathcal{Z}_r\oplus\mathcal{B}(\alpha)\subseteq\mathcal{Z},~ \forall \alpha\leq \alpha_2. 
\end{align}
\end{proof}
As a summary, given matrices $P_f,~K_f$ satisfying~\eqref{eq:lpv}, we can compute a local Lipschitz bound~\eqref{eq:lpv_2}, which in turn implies a maximal terminal set size $\alpha_1$. 
Similarly, the constraint sets $\mathcal{Z}$ and $\mathcal{Z}_r$ in combination with $K_f,~P_f$ imply an upper bound $\alpha_2$ to ensure constraint satisfaction. 
Then Assumption~\ref{ass:term} is satisfied for any $\alpha\leq \min\{\alpha_1,~\alpha_2\}$. 
This result is an extension of~\cite{chen1998quasi,rawlings2017model} to arbitrary dynamic references. 
\subsection{Quasi-LPV based procedure}
Lemma~\ref{lemma:lpv} states that matrices satisfying inequality~\eqref{eq:lpv} also satisfy Assumption~\ref{ass:term} with a suitable terminal set size $\alpha$. 
In the following, we formulate computationally tractable optimization problems to compute matrices that satisfy the conditions in Lemma~\ref{lemma:lpv}.  
The following Lemma transforms the conditions in~\eqref{eq:lpv} to be linear in the arguments. 
\begin{lemma}
\label{lemma:LMI}
Suppose that there exists matrices $X(r)$,~$Y(r)$ continuous in $r$, that satisfy the constraints in~\eqref{eq:LMI} for all $r\in\mathcal{Z}_r,~r^+\in\mathcal{R}(r)$. 
Then  $P_f=X^{-1}$, $K_f=YP_f$ satisfy~\eqref{eq:lpv}.
\end{lemma}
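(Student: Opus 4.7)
The plan is to derive~\eqref{eq:LMI} from~\eqref{eq:lpv} via the classical change of variables $X(r)=P_f(r)^{-1}$, $Y(r)=K_f(r)X(r)$ together with the Schur complement, and then observe that each step in this derivation is an equivalence so the implication reverses. This transformation is standard in (quasi-)LPV synthesis because it turns the bilinear dependence on $(P_f,K_f)$ in~\eqref{eq:lpv} into an affine dependence on $(X,Y)$, at the price of introducing the successor variable $X(r^+)$.

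Concretely, starting from~\eqref{eq:lpv}, I would first perform a congruence transformation by pre- and post-multiplying with $X(r)$. Using $K_f(r)X(r)=Y(r)$, this yields
\begin{equation*}
(A(r)X(r)+B(r)Y(r))^\top X(r^+)^{-1}(A(r)X(r)+B(r)Y(r)) \leq X(r) - X(r)QX(r) - Y(r)^\top RY(r) - \epsilon X(r)^2.
\end{equation*}
The quadratic term in $X(r^+)^{-1}$ on the left is then handled by one Schur complement step against an $X(r^+)$ pivot, and the quadratic terms $X(r)QX(r)$, $Y(r)^\top RY(r)$, $\epsilon X(r)^2$ on the right are pulled into additional Schur blocks, with $Q^{1/2}X(r)$, $R^{1/2}Y(r)$, $\sqrt{\epsilon}\,X(r)$ appearing off-diagonal against identity pivots. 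The resulting block matrix inequality is precisely~\eqref{eq:LMI}. Since the congruence transformation (with the invertible $X(r)$) and each Schur complement (with strictly positive pivot blocks) are equivalences, all steps reverse: satisfaction of~\eqref{eq:LMI} implies~\eqref{eq:lpv} under the identifications $P_f(r)=X(r)^{-1}$, $K_f(r)=Y(r)X(r)^{-1}$.

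The main obstacle is purely bookkeeping: correctly arranging the Schur complement blocks and verifying that the requisite positivity of each pivot block, most importantly $X(r)\succ 0$ (which must be enforced as part of~\eqref{eq:LMI}), holds so that the equivalences are genuine and not merely one-sided implications. There is no analytic content beyond these well-known linear-algebraic manipulations; in particular, continuity of $X,Y$ in $r$ immediately transfers to continuity of $P_f,K_f$ on the compact set $\mathcal{Z}_r$, so the hypothesis of Lemma~\ref{lemma:lpv} is met and the subsequent existence of a feasible terminal set size $\alpha$ follows from that lemma.
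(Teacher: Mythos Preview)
Your approach is correct and essentially identical to the paper's: the paper's proof (fleshed out in Lemma~\ref{lemma:LMI_output}) performs exactly the congruence with $X(r)=P_f(r)^{-1}$, $Y(r)=K_f(r)X(r)$ followed by the Schur complement, and notes that these steps are standard. The only cosmetic difference is that the paper combines the $Q$ and $\epsilon I_n$ terms into a single Schur block via $(Q+\epsilon I_n)^{1/2}X(r)$ rather than splitting them into two separate blocks as you suggest, so the resulting LMI~\eqref{eq:LMI} is $4\times 4$ rather than $5\times 5$; this does not affect the argument.
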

\begin{proof}
The proof is standard, compare~\cite{boyd1994linear} and Lemma~~\ref{lemma:LMI_output} in the Appendix. 
\end{proof}
The optimization problem~\eqref{eq:LMI} is convex, linear in $X,~Y$ and minimizes the worst-case terminal cost $P_f(r)\leq X_{\min}^{-1}$.  
So far, the result is only conceptual, since~\eqref{eq:LMI} is an infinite programming problem (infinite dimensional optimization variables with infinite dimensional constraints). 
In particular, we need a finite parameterization of $X,~Y$ and the infinite constraints need to be converted into a finite set of sufficient constraints. 

\begin{remark}
\label{rk:SOS}
One solution to this problem would be sum-of-squares (SOS) optimization~\cite{parrilo2003semidefinite}. 
Assuming $A,~B$ are polynomial, consider matrices $X,~Y$ polynomial in $r$ (with a specified order $d$) and ensure that the matrix in~\eqref{eq:LMI} is SOS. 
A similar approach is suggested in~\cite{manchester2017control} to find a control contraction metric (CCM) for continuous-time systems (which is a strongly related problem). 
This approach is not pursued here since most systems require a polynomial of high order to approximate the nonlinear dynamics and the computational complexity grows exponentially in $n^d$, thus prohibiting the practical application. 
The connection between CCM and LPV gain-scheduling design is discussed in~\cite{wang2019comparison}.
\end{remark}

We approach this problem from the perspective of quasi-LPV systems and gain-scheduling~\cite{rugh2000research}. 
First, write the Jacobian~\eqref{eq:A_r} as 
\begin{align}
\label{eq:A_paramlin}
A(r)=A_0+\sum_{j=1}^{p} \theta_j(r) A_j,~ 
B(r)=B_0+\sum_{j=1}^{p} \theta_j(r) B_j, 
\end{align}
with some nonlinear (continuously differentiable) parameters  $\theta\in\mathbb{R}^p$.  
This can always be achieved with $p\leq n(n+m)$. 
We impose the same structure on the optimization variables with
\begin{align}
\label{eq:X}
X(r)=X_0+\sum_{j=1}^p \theta_j(r) X_j,~ Y(r)=Y_0+\sum_{j=1}^p \theta_j(r) Y_j. 
\end{align}
\begin{remark}
\label{rk:affine}
For input affine systems of the form $f(x,u)=f_x(x)+B u$, the Jacobian~\eqref{eq:A_paramlin} and correspondingly the parameters $\theta_i$ only depend on $x_r$. 
Thus, the resulting terminal ingredients are solely parameterized by the state $x_r$.  
\end{remark}
Using the parameterization \eqref{eq:A_paramlin}-\eqref{eq:X}, \eqref{eq:LMI} contains only a finite number of optimization variables, but still needs to be verified for all $r\in\mathcal{Z}_r,~r^+\in\mathcal{R}(r)$. 
There are two options to deal with this:  convexifying the problem or gridding the constraint set.

\subsubsection{Convexify}
\begin{table*}
\small{
\begin{subequations}
\label{eq:LMI}
\begin{align}
\min_{X(r),Y(r),X_{\min}}& - \log \det X_{\min}\\
\text{s.t. }&\begin{pmatrix}
X(r)&X(r)A(r)^\top+Y(r)^\top B(r)^\top&(Q+\epsilon)^{1/2}X(r)&(R^{1/2}Y(r))^\top\\
*&X(r^+)&0&0\\
*&*&I&0\\
*&*&*&I
\end{pmatrix}\geq 0,\\
&X_{\min}\leq X(r),\\
\label{eq:LMI_r}
&\forall r\in\mathcal{Z}_r,~r^+\in\mathcal{R}(r).  
\end{align}
\end{subequations}
\begin{subequations}
\label{eq:LMI_LPV}
\begin{align}
\min_{X_i,Y_i,\Lambda_i,X_{\min}}& - \log\det X_{\min}\\
\label{eq:LMI_LPV1}
\text{s.t. }&\begin{pmatrix}
X(\theta)&X(\theta)A(\theta)^\top+Y(\theta)^\top B(\theta)^\top&(Q+\epsilon)^{1/2}X(\theta)&(R^{1/2}Y(\theta))^\top\\
*&X(\theta^+)&0&0\\
*&*&I&0\\
*&*&*&I
\end{pmatrix}
- \begin{pmatrix}\sum_{i=1}^p\theta_i^2\Lambda_i&0\\0&0\end{pmatrix}\geq 0,\\
&X_{\min}\leq X(\theta),\quad \forall (\theta,\theta^+)\in\text{Vert}(\overline{\Theta}),  \\
\label{eq:LMI_LPV2}
&\begin{pmatrix}
0&(A_iX_i+B_iY_i)^\top\\
(A_iX_i+B_iY_i)&0
\end{pmatrix}-\Lambda_i\leq 0,\quad \Lambda_i\geq 0,\quad  i=1,\dots,p.
\end{align}
\end{subequations}
}
\end{table*}
In order to convexify~\eqref{eq:LMI}, we match the constraint sets $\mathcal{Z}_r,~\mathcal{R}(r)$ on the reference $r$ to polytopic constraint sets $\Theta,~\Omega$ on the parameters $\theta$. 
The polytopic sets $\Theta,~\Omega(\theta)$ need to satisfy
\begin{align}
\label{eq:Theta_set}
\theta(r)\in&\Theta,\quad \forall r\in\mathcal{Z}_r,\\
\theta(r^+)\in&\Omega(\theta(r)),\quad \forall r^+\in\mathcal{R}(r). \nonumber
\end{align}	
Computing a set $\Theta$, such that $\theta(r)\in\Theta$ for all $r\in\mathcal{Z}_r$ can be achieved by considering a hyperbox $\Theta=\{\theta\in\mathbb{R}^p|~\theta_i\in[\underline{\theta}_i,\overline{\theta}_i]\}$. 
For $\Omega$, a simple approach is $\Omega(\theta)=\{\theta\}\oplus\Omega$, where $\Omega$ is a hyperbox that encompasses the maximal change in the parameters $\theta$ in one time step, i.e. $\Omega=\{\Delta \theta\in\mathbb{R}^p|~\Delta\theta_i\in[\underline{v}_i,\overline{v}_i]\}$. 
We denote the joint polytopic constraint set by 
\begin{align}
\label{eq:overline_theta}
(\theta,\theta^+)\in\overline{\Theta}=\{(\theta,\theta^+)\in \Theta\times\Theta|~\theta^+\in\{\theta\}\oplus\Omega\},
\end{align}
which consists of $6^p$ vertices. 
The following proposition provides a simple convex procedure to compute a terminal cost, by solving a finite number of LMIs.
\begin{proposition}
\label{prop:LMI_lpv}
Suppose that there exist matrices $X_i,~Y_i,~\Lambda_i,~X_{\min}$ that satisfy the constraints in~\eqref{eq:LMI_LPV}. 
Then the matrices
\begin{align*}
P_f(r)=&X^{-1}(r),\quad K_f(r)=Y(r)P_f(r),
\end{align*} 
satisfy~\eqref{eq:lpv}, with $X,~Y$ according to~\eqref{eq:X}. 
\end{proposition}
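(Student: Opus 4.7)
The plan is to show that the LMIs in~\eqref{eq:LMI_LPV} imply that the matrix inequality~\eqref{eq:LMI} of Lemma~\ref{lemma:LMI} holds for every $r\in\mathcal{Z}_r,~r^+\in\mathcal{R}(r)$, after which Lemma~\ref{lemma:LMI} directly yields the claimed properties of $P_f,~K_f$. Because of~\eqref{eq:Theta_set}, it suffices to verify~\eqref{eq:LMI} on all of $\overline{\Theta}$. Denote by $M(\theta,\theta^+)$ the matrix on the left-hand side of~\eqref{eq:LMI} when $A,B,X,Y$ are substituted from~\eqref{eq:A_paramlin} and~\eqref{eq:X}, and set $\tilde M(\theta,\theta^+):=M(\theta,\theta^+)-\mathrm{diag}(\sum_{i=1}^p\theta_i^2\Lambda_i,0)$, so that~\eqref{eq:LMI_LPV1} reads $\tilde M\geq 0$ at every vertex of $\overline{\Theta}$.

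The first step is to expand the off-diagonal block
\begin{align*}
F(\theta):=A(\theta)X(\theta)+B(\theta)Y(\theta)=\sum_{i,j=0}^p\theta_i\theta_j(A_iX_j+B_iY_j),
\end{align*}
with $\theta_0:=1$, and to separate the quadratic diagonal contributions $\sum_{i=1}^p\theta_i^2(A_iX_i+B_iY_i)$ from the remaining multi-affine terms, i.e., those affine in each $\theta_k$ when the other coordinates are fixed. Inserting this decomposition into $\tilde M$, the coefficient of $\theta_i^2$ inside the upper-left $2n\times 2n$ block equals $\begin{pmatrix}0 & (A_iX_i+B_iY_i)^\top\\ A_iX_i+B_iY_i & 0\end{pmatrix}-\Lambda_i$, which is negative semidefinite by~\eqref{eq:LMI_LPV2}. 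Consequently, for any fixed values of the remaining coordinates, $\tilde M$ is concave in each $\theta_i$; it is moreover affine in each $\theta_j^+$ because $\theta^+$ enters only through the $(2,2)$ block $X(\theta^+)$ and there are no cross terms coupling $\theta_i$ and $\theta_j^+$. Thus $\tilde M$ is jointly concave in each pair $z_i:=(\theta_i,\theta_i^+)$ when the remaining pairs are held fixed.

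Next I exploit the product structure of $\overline{\Theta}$: the constraints $\theta\in\Theta$, $\theta^+\in\Theta$, $\theta^+-\theta\in\Omega$ decouple coordinatewise into $\overline{\Theta}=\prod_{i=1}^p H_i$, where $H_i\subset\mathbb{R}^2$ is the (generically hexagonal) 2D polytope in $(\theta_i,\theta_i^+)$, accounting for the $6^p$ vertices. I then argue by induction on $p$: fixing $z_2,\dots,z_p$ at vertices of $H_2,\dots,H_p$, joint concavity in $z_1$ together with $\tilde M\geq 0$ at the six vertices of $H_1$ extends the inequality to all of $H_1$ via convex combinations; iterating this vertex-to-polytope step once per coordinate pair yields $\tilde M\geq 0$ throughout $\overline{\Theta}$. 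This propagation is the step I expect to be the main subtlety, and it hinges on two facts established above: the only quadratic-in-$\theta$ terms in $M$ are the diagonal $\theta_i^2$ ones, and the slack matrices $\Lambda_i$ are chosen precisely to dominate these while preserving concavity when subtracted.

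Finally, since $\Lambda_i\geq 0$ and $\theta_i^2\geq 0$, the subtracted block is positive semidefinite, so $M=\tilde M+\mathrm{diag}(\sum_i\theta_i^2\Lambda_i,0)\geq\tilde M\geq 0$ throughout $\overline{\Theta}$. Applied at $(\theta(r),\theta(r^+))\in\overline{\Theta}$ and combined with $X_{\min}\leq X(\theta)$, this is exactly the hypothesis of Lemma~\ref{lemma:LMI}, from which $P_f(r)=X(r)^{-1}$ and $K_f(r)=Y(r)P_f(r)$ satisfy~\eqref{eq:lpv}.
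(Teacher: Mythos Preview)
Your proposal is correct and follows essentially the same approach as the paper: reduce to Lemma~\ref{lemma:LMI}, use $\Lambda_i\geq 0$ together with~\eqref{eq:Theta_set} to pass from $\overline\Theta$ to $\mathcal{Z}_r\times\mathcal{R}$, and then argue that vertex satisfaction of~\eqref{eq:LMI_LPV1} propagates to all of $\overline\Theta$ via a multi-concavity argument enabled by~\eqref{eq:LMI_LPV2}. The only presentational difference is that the paper invokes the multi-convexity results of Apkarian et al.\ (concavity along the edge directions $\theta_i$, $\theta_i^+$, $\theta_i^+-\theta_i$) directly, whereas you give the argument in a self-contained form by establishing joint concavity of $\tilde M$ in each coordinate pair $z_i=(\theta_i,\theta_i^+)$ and then running an induction over the product decomposition $\overline\Theta=\prod_i H_i$; your joint-concavity observation is slightly stronger than what the paper needs (it implies concavity along all three edge directions at once) and works cleanly here because the Hessian in $z_i$ is block-diagonal. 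One small wording point: the phrase ``via convex combinations'' is imprecise---what you actually use is that a concave matrix-valued function on a polytope attains its (semidefinite) minimum at an extreme point, so nonnegativity at the vertices of $H_i$ yields nonnegativity on all of $H_i$.
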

\begin{proof}
Due to Lemma~\ref{lemma:LMI}, it suffices to show that $X(r),~Y(r)$ satisfy the constraints in~\eqref{eq:LMI}.  
Due to the definition of the set $\overline{\Theta}$~\eqref{eq:overline_theta} and $\Lambda_i\geq 0$, any solution that satisfies the constraints~\eqref{eq:LMI_LPV1} over all $(\theta,\theta^+)\in\overline{\Theta}$, also satisfies the constraints~\eqref{eq:LMI} for all $r\in\mathcal{Z}_r,~r^+\in\mathcal{R}(r)$. 
It remains to show that it suffices to check the inequality on the vertices of the constraint set $\overline{\Theta}$. 
This last result is a consequence of multi-convexity~\cite[Corollary 3.2]{apkarian2000parameterized}. 
In particular, if a function $f$ is multi-concave along the edges of the constraint set $\overline{\Theta}$, then it attains its minimum at a vertex of $\overline{\Theta}$ and thus it suffices to verify~\eqref{eq:LMI_LPV1} over the vertices of $\overline{\Theta}$.  
The edges of $\overline{\Theta}$~\eqref{eq:overline_theta} are characterized by $\{\theta_i,~\theta_i^+,~\theta_i^+-\theta_i\}$, $i=1,\dots,p$. 
A function is multi-concave if the second derivative w.r.t. these directions is negative-semi-definite, compare~\cite[Corollary~3.4]{apkarian2000parameterized}. 
Similar to~\cite[Corollary~3.5]{apkarian2000parameterized}, the additional constraint~\eqref{eq:LMI_LPV2} ensures that the function is multi-concave. 
Thus, it suffices to verify inequality~\eqref{eq:LMI_LPV1} on the vertices of the constraint set $\overline{\Theta}$.  
\end{proof}
\begin{remark}
\label{remark:box}
The result in Proposition~\ref{prop:LMI_lpv} remains valid, if the set $\overline{\Theta}$ in~\eqref{eq:overline_theta} is replaced by the set $\overline{\Theta}=\Theta\times (\Theta\oplus\Omega)$. 
This set has only $4^p$ vertices and the induced conservatism of this approximation is negligible if $\Omega$ is small compared to $\Theta$. 
\end{remark}

\subsubsection{Gridding} 
A common heuristic to ensure that parameter dependent LMIs such as~\eqref{eq:LMI} hold for all $(r,r^+)$ is to consider the constraints on sufficiently many sample points in the constraint set, compare e.g. \cite[Sec.~4.2]{apkarian2000parameterized}.  
Due to continuity, the constraint is typically satisfied on the full constraint set if it holds on a sufficiently fine grid.
For this method it is crucial that satisfaction of~\eqref{eq:term_dec} is verified by using a fine grid (compare Algorithm~\ref{alg:offline_alpha}).  

The gridding consists of a grid over all possible state and input combinations $(r,r^+)$, i.e., all considered points satisfy 
\begin{align}
\label{eq:grid_r}
r,~r^+\in\mathcal{Z}_r, \quad r^+\in\mathcal{R}(r),\quad \mathcal{R}(r^+)\neq\emptyset.  
\end{align}
For the simple structure $\mathcal{R}(r)$ in Assumption~\ref{ass:ref} this can be achieved by gridding $r$, computing $x_r^+=f(x_r,u_r)$, and considering all $u_r^+$, such that $(x_r^+,u_r^+)\in\mathcal{Z}_r$ and $(f(x_r^+,u_r^+),\tilde{u}_r)\in\mathcal{Z}_r$ with some $\tilde{u}_r$. 
This approach does not introduce additional conservatism, but is computationally challenging for high dimensional systems.  
As discussed in Remark~\ref{rk:ref} we can include additional constraints on the reference, which makes the offline computation less conservative.  
If some parameters, e.g. $u_r$, enter the LMIs affinely and are subject to polytopic constraints, it suffices to consider the vertices of the corresponding constraint set.

The advantage of the convex procedure (compared to the gridding) is that it typically scales better with the system dimension. 
This comes at the cost of additional conservatism due to the construction of the set $\overline{\Theta}$ and the additional multi-convexity constraint~\eqref{eq:LMI_LPV2}. 
The computational demand can be reduced by considering (block-)diagonal multipliers $\Lambda_i=\lambda_i I$. 
It can often be beneficial to consider a combination of the two approaches, i.e. grid in some dimensions and conservatively convexify in others. 
The advantages and applicability of both approaches are explored in more detail in the numerical examples in Section~\ref{sec:num}. 

The main result is that we can formulate the offline design procedure similar to the gain scheduling synthesis of (quasi)-LPV systems and thus can draw on a well established field to formulate\footnote{%
If the parameters $\theta_i$ are chosen based on a vertex representation ($\theta_i\geq 0,\sum_{i=1}^p\theta_i=1$) the multi-convexity condition~\eqref{eq:LMI_LPV2} can be replaced by positivity conditions of the polynomials,  compare for example~\cite{montagner2005gain}.
In~\cite{mao2003robust} a convexification with an additional matrix is considered. 
 More elaborate methods to formulate LPV synthesis with finite LMIs can be found in~\cite{scherer2000linear}.  
} offline LMI procedures, compare~\cite{rugh2000research}.

\subsection{Non-conservative terminal set size $\alpha$}  
\label{sec:alpha}
The terminal set size $\alpha$ derived in Lemma~\ref{lemma:lpv} can be quite conservative. 
In the following we illustrate how a non conservative value $\alpha$ can be computed (given $P_f$ and $K_f$). 

\subsubsection{Constraint satisfaction - $\alpha_2$}
Assume that we have polytopic constraints of the form $\mathcal{Z}=\{r=(x,u)|L_r r\leq l\}$. 
The constant $\alpha_2$, with the property that $\alpha\leq \alpha_2$ implies constraint satisfaction~\eqref{eq:term_con}, can be computed with 
\begin{align}
\label{eq:alpha_2_better}
\alpha_2&:=\max_{\alpha}~ \alpha\\
\text{s.t. }& \|P_f(r)^{-1/2}\begin{pmatrix}I_n&K_f^\top(r)\end{pmatrix}L_{r,j}^\top\|^2\alpha\leq (l_j-L_{r,j} r)^2,\nonumber\\ 
& \forall r\in\mathcal{Z}_r,\quad  j=1,\dots n_z. \nonumber 
\end{align}
This problem can be efficiently solved by girdding the constraint set $\mathcal{Z}_r$, solving the resulting linear program (LP) for each point $r$ and taking the minimum.    
In the special case that $P_f,~K_f$ are constant this reduces to one small scale LP.  
\subsubsection{Local Stability - $\alpha_1$}
Determining a non-conservative constant $\alpha_1$, related to the local Lyapunov function $V_f$ can be significantly more difficult. 
For comparison, in the setpoint stabilization case a non-convex optimization problem is formulated to check whether~\eqref{eq:term_dec} holds for a specific value of $\alpha_1$, compare~\cite[Rk.~3.1]{chen1998quasi}.  
In a similar fashion, we consider the following algorithm\footnote{%
Algorithm~\ref{alg:offline_alpha} can be thought of as a sampling based strategy to solve this non-convex optimization problem considered in~\cite[Rk.~3.1]{chen1998quasi}. 
Using standard convex solvers, like sequential quadratic programming (SQP), yield a faster solution, but can get stuck in local minima. 
This is dangerous for this problem, since the local minima correspond to values $\alpha$ that do not satisfy Assumption~\ref{ass:term}.  
Alternatively, nonlinear Lipschitz-like bounds can be used to reduce the conservatism, compare~\cite{griffith2018robustly} (which, however, also use sampling).  
} to determine whether~\eqref{eq:term_dec} holds for all $\alpha\leq \alpha_1$:
  \begin{algorithm}[H]
\caption{Offline computation - Local stability $\alpha_1$}
\label{alg:offline_alpha}
\begin{algorithmic}[1]
\Statex Given a candidate constant $\alpha_1$:
\Statex \textbf{Grid: } {Select $(r,r^+)$ satisfying~\eqref{eq:grid_r}}
\State  Evaluate $P_f(r),P_f(r^+),K_f(r)$ using~\eqref{eq:X}. 
\State Generate random vectors $\Delta x_i$: with $\|\Delta x_i\|_{P_f(r)}^2\leq \alpha_1$.
\State Check if $x_i=x_r+\Delta x_i$ satisfies~\eqref{eq:term_dec}. 
\end{algorithmic}
\end{algorithm}
Starting with $\alpha_1=\alpha_2$, the value $\alpha_1$ is iteratively decreased until all considered combination ($r,r^+,x_i$) satisfy~\eqref{eq:term_dec}.

The overall offline procedure to compute the terminal ingredients (Ass.~\ref{ass:term}) is summarized as follows:
\begin{algorithm}[H]
\caption{Offline computation}
\label{alg:offline}
\begin{algorithmic}[1]
\State  Define $\theta$ corresponding to the linearization~\eqref{eq:A_paramlin}. 
\State  LMI computation using gridding or convexification:
\Statex \textbf{Convex}: Determine hyperbox sets $\Theta$, $\Omega$ satisfying~\eqref{eq:Theta_set}.
\Statex \quad \quad Solve~\eqref{eq:LMI_LPV} using $\overline{\Theta}$ according to~\eqref{eq:overline_theta} or Remark~\ref{remark:box}.  
\Statex \textbf{Gridding}: Select $(r_i,r_i^+)$ satisfying~\eqref{eq:grid_r}. 
\Statex \quad\quad Solve~\eqref{eq:LMI} for all $(r_i,~r_i^+)$. 
\State  Compute size of the terminal set $\alpha=\min\{\alpha_1,\alpha_2\}$:
\Statex \quad a):compute $\alpha_1$ using Algorithm~\ref{alg:offline_alpha} (or~\eqref{eq:alpha_1}),
\Statex \quad b):compute $\alpha_2$ using~\eqref{eq:alpha_2_better} (or~\eqref{eq:alpha_2}).
\end{algorithmic}
\end{algorithm}
The presented offline procedure is considerably more involved than for example the computation for one specific setpoint~\cite{chen1998quasi}. 
We emphasize that this procedure only has to be completed once and we need \textit{no} repeated offline computations to account for changing operation conditions. 
Furthermore, the applicability to nonlinear systems with the corresponding computational effort offline is detailed with numerical examples in Section~\ref{sec:num}. 
\subsection{Setpoint tracking}
\label{sec:increm_setpoint}
Now we discuss setpoint tracking, which is included in the previous derivation as a special case with $\mathcal{Z}_r$ such that $(x_r,u_r)\in\mathcal{Z}_r$ implies $x_r=f(x_r,u_r)$ and $\mathcal{R}(r)=r$. 
Note, that both presented approaches significantly simplify in this case. 
For the gridding approach it suffices to grid along the steady-state manifold $\mathcal{Z}_r$ which is typically low dimensional. 
In the convex approach (Prop.~\ref{prop:LMI_lpv}) we have $\theta^+=\theta$ and thus we only consider the $2^p$ vertices of $\Theta$. 

Compared to the dynamic reference tracking problem, the problem of tracking a setpoint has received a lot of attention in the literature and many solutions have been suggested. 

One of the first attempts to solve this issue is the usage of a pseudo linearization in~\cite{findeisen2000nonlinear}. 
There, a nonlinear state and input transformation is sought, such that the linearization of the transformed system around the setpoints is constant and thus constant terminal ingredients can be used. 
This approach seems unpractical, since there is no easy or simple method to compute such a pseudo linearization.
 
In~\cite{limon2018nonlinear,wan2003efficient,wan2003offline} the steady-state manifold $\mathcal{Z}_r$ is partitioned into sets. 
In each set the nonlinear system is described as an LTV system and a constant terminal cost and controller are computed. 
Correspondingly, in closed-loop operation under changing setpoints~\cite{limon2018nonlinear} the terminal cost matrix $P_f$ is piece-wise constant. 
This might cause numerical problems in the optimization, since the cost is not differentiable with respect to the reference $r$. 
Furthermore, the (manual) partitioning of the steady-state manifold seems difficult for general MIMO systems (if the dimension of the steady-state manifold is larger than one). 
In comparison, Algorithm~\ref{alg:offline} yields continuously parameterized terminal ingredients, thus avoiding the need for user defined partitioning and piece-wise definitions. 

In~\cite[Remark~8]{muller2014performance} it was proposed to compute a continuously parameterized controller $K_f(r)$ by analytically using a pole-placement formula and solving the corresponding Lyapunov\footnote{%
In~\cite{muller2014performance}, the terminal cost $V_f$ is computed for a (differentiable) economic stage cost $\ell(x,u)$ (not necessarily quadratic), compare also~\cite{fagiano2013generalized}. 
The computation of the terminal cost is decomposed into a linear and quadratic term, compare~\cite{amrit2011economic}. 
Computing the quadratic term of this economic terminal cost is equivalent to computing a quadratic terminal cost for a quadratic stage cost (Ass~\ref{ass:term}). 
} equation to obtain $P_f(r)$. 
The resulting terminal ingredients are quite similar to the proposed ones. 
However, this procedure cannot be directly translated into a simple optimization problem and might hence not be tractable.

\section{Nonlinear MPC subject to changing operation conditions}
\label{sec:ext}
Many control problems are more general than the reference tracking considered in Section~\ref{sec:MPC}. 
One challenge includes tracking and output regulation with exogenous signals in order to accommodate online changing operation conditions. 
For this set of problems, the reference $r$ might not satisfy Assumption~\ref{ass:ref} (due to sudden changes and unreachable signals), compare~\cite{limon2008mpc,limon2016mpc,limon2018nonlinear}. 
More generally, the minimization of a possibly online changing and non-convex economic cost is a (non-trivial) control problem which is often encountered, compare~\cite{fagiano2013generalized,muller2013economic,muller2014performance,ferramosca2014economic}. 
One promising method to solve these problems is the simultaneous optimization of an artificial reference, as done in~\cite{limon2008mpc,limon2016mpc,limon2018nonlinear,fagiano2013generalized,muller2013economic,muller2014performance,ferramosca2014economic}. 
Compared to a standard reference tracking MPC formulation such as~\eqref{eq:MPC}, these schemes ensure recursive feasibility despite changes in exogenous signals (such as the desired output reference or the economic cost).
In this section, we show how the reference generic terminal ingredients can be used to design nonlinear MPC schemes that reliably operate under changing operating conditions, as an extension and combination of the ideas in~\cite{limon2008mpc,limon2016mpc,limon2018nonlinear,fagiano2013generalized,muller2013economic,muller2014performance,ferramosca2014economic}. 
In particular, we present a scheme that exponentially stabilizes the periodic trajectory which best tracks an exogenous output signal.   
The extension of the economic MPC schemes~\cite{fagiano2013generalized,muller2013economic,muller2014performance,ferramosca2014economic} to periodic artificial trajectories based on the reference generic terminal ingredients is beyond the scope of this work and part of current research. 
\subsection{Nonlinear periodic tracking MPC subject to changing exogenous output references}
\label{sec:gen_output}
We assume that at time $t$ an exogenous $T$-periodic output reference signal $y_{e}(\cdot|t)\in\mathbb{R}^{p\times T}$ is given. 
For some $T$-periodic reference $r(\cdot|t)=(x_r(\cdot|t),u_r(\cdot|t))\in\mathbb{R}^{(n+m)\times T}$, we define the tracking cost with respect to this output signal $y_e$ by
\begin{align*}
J_T(r(\cdot|t),y_e(\cdot|t))=&\sum_{j=0}^{T-1} \|\underbrace{h(r(j|t))}_{={y_r(j|t)}}-y_e(j|t)\|^2,
\end{align*}
with a bounded nonlinear output function $h:\mathcal{Z}_r\rightarrow\mathbb{R}^p$. 
The objective is to stabilize the feasible $T$-periodic reference trajectory $r$, that minimizes $J_T$. 
In~\cite{limon2008mpc,limon2018nonlinear} the issue of stabilizing the optimal setpoint for piece-wise constant output signals has been investigated.  
In~\cite{limon2016mpc} periodic trajectories have been considered for the special case of linear systems. 
By combining these methods with the proposed terminal ingredients, we can design a nonlinear MPC scheme that stabilizes the optimal periodic\footnote{
In the case of setpoint tracking ($T=1$), the MPC scheme reduces to~\cite{limon2018nonlinear}. 
 As discussed in Section~\ref{sec:increm_setpoint}, the proposed procedure can be used to design suitable terminal ingredients for setpoints. 
} trajectory for periodic output reference signals, compare~\cite{kohler2018mpc}. 
The scheme is based on the following optimization problem
\begin{subequations}
\label{eq:limon}
\begin{align}
&W_T(x(t),y_e(\cdot|t))\\
=&\min_{u(\cdot|t),r(\cdot|t)}J_N(x(\cdot|t),u(\cdot|t),r(\cdot|t))
+J_T(r(\cdot|t),y_e(\cdot|t))\nonumber\\
\text{s.t. }&x(k+1|t)=f(x(k|t),u(k|t)),\quad x(0|t)=x(t),\\
&(x(k|t),u(k|t))\in\mathcal{Z},\quad 
x(N|t)\in\mathcal{X}_f({r}(N|t)),\\
&r(j+1|t)\in\mathcal{R}(r(j|t))\subseteq\mathcal{Z}_r,\\
&r(l+T|t)=r(l|t),  ~ l=0,\dots,\max\{0,N-T\},\\
&j=0,\dots,T-1,\quad k=0,\dots,N-1. \nonumber
\end{align}
\end{subequations}
This scheme is recursively feasible, independent of the output reference signal $y_e$. 
Furthermore, if the exogenous signal $y_e$ is $T$-periodic the closed-loop system is stable. 
Additionally, if a convexity and continuity condition on the set of feasible periodic orbits and the output function $h$ is satisfied~\cite[Ass.~5]{kohler2018mpc}, then  the optimal reachable periodic trajectory is (uniformly) exponentially stable for the resulting closed-loop system. 
Thus, the terminal ingredients enable us to implement a nonlinear version of the tracking scheme in~\cite{limon2008mpc,limon2016mpc}, that ensures exponential stability of the optimal (periodic) operation. 
More details on the theoretical properties and numerical examples can be found in~\cite{kohler2018mpc}. 
Although the consideration of general non-periodic trajectories is still an open issue, we conjecture that the approach can be extended to any class of finitely parameterized reference trajectories.

\section{Numerical examples}
\label{sec:num}
The following examples show the applicability of the proposed method to nonlinear systems and the closed-loop performance improvement when including suitable terminal ingredients. 
We first illustrate the basic procedure at the example of a periodic reference tracking task for a continuous stirred-tank reactor (CSTR). 
Then we demonstrate the advantages of using suitable terminal ingredients with (robust) trajectory tracking and an evasive maneuver test for a car. 
Additional examples, including tracking of periodic output signals (Sec.~\ref{sec:gen_output}) with a nonlinear ball and plate system can be found in~\cite{kohler2018mpc}. 

In the following examples, the offline computation is done with an Intel Core i7 using the semidefinite programming (SDP) solver SeDuMi-1.3~\cite{sturm1999using} and the online optimization is done with CasADi~\cite{andersson2019casadi}.  
The offline computation can be done using both the discrete-time formulation (Sec.~\ref{sec:loc_stab}) or the continuous-time formulation (Appendix~\ref{sec:app_cont}). 
Hence, we also compare the performance of these different formulations.
\subsection{Periodic reference tracking - CSTR}
\subsubsection*{System model}
We consider a continuous-time model of a continuous stirred-tank reactor (CSTR)
\begin{align*}
\begin{pmatrix}
\dot{x}_1\\\dot{x}_2\\\dot{x}_3
\end{pmatrix}
=&\begin{pmatrix}
1-x_1-10^4x_1^2\exp(\frac{-1}{x_3})-400x_1\exp(\frac{-0.55}{x_3})\\
10^4 x_1^2\exp(\frac{-1}{x_3})-x_2\\
u-x_3
\end{pmatrix},
\end{align*}
where $x_1,~x_2,~x_3$ correspond to the concentration of the reaction, the desired product, waste product and $u$ is related to the heat flux through the cooling jacket, compare~\cite{bailey1971cyclic}, \cite[Sec.~3.4]{faulwasser2018economic}. 
The constraints are 
\begin{align*}
\mathcal{Z}_r=&[0.05,0.45]\times[0.05,0.15]\times [0.05,0.2]\times[0.059,0.439],\\
\mathcal{Z}=&[0,1]^3\times[0.049,0.449].
\end{align*}
The discrete-time model is defined with explicit Runge-Kutta discretization of order $4$ and a sampling time\footnote{
In~\cite[Sec.~3.4]{faulwasser2018economic} a sampling time of $h=0.1$ is used.  
However, with the considered fourth order explicit Runge-Kutta discretization, a sampling time of $h=0.1$ does not preserve stability of the continuous-time system. 
} of $h=0.01$. 

For this system, periodic operation is economically beneficial, compare~\cite{bailey1971cyclic}. 
Thus, we consider the problem of tracking reachable periodic reference trajectories $r$ (Assumption~\ref{ass:ref}), corresponding to the economic operation of the plant. 

\subsubsection*{Offline computations}
In the following, we illustrate the reference generic offline computation for this system. 
We consider the standard quadratic tracking stage cost with $Q=I_3$, $R=10$ and use $\epsilon=0.1$.

For the continuous-time system, the Jacobian~\eqref{eq:A_r} contains four nonlinear terms, yielding the parameters
\begin{align*}
\theta_1(x)=&400\exp(-0.55/x_3),\quad 
\theta_2(x)=2\cdot 10^4 x_1\exp(-1/x_3),\\
\theta_3(x)=&10^4({x_1}/{x_3})^2\exp(-1/x_3),\\
\theta_4(x)=&400\cdot 0.55{x_1}/({x_3^2})\exp(-0.55/x_3). 
\end{align*}
The input $u_r$ enters the LMIs affinely. 
Thus, we only consider the two vertices of $u_r$ and grid $(x_1,x_3)$ using $10^2$ points. 
 
For the discrete-time system, the explicit description of the nonlinear dynamics $f$ and the corresponding Jacobian $A(r),~B(r)$ is complex. 
Thus, we directly define the non-constant\footnote{
The derivatives $\partial f_3/\partial r$, $\partial f_1/\partial x_2$, and $\partial f_2/\partial x_2$ are constant.   
} components of the Jacobian $A,~B$ as the parameters $\theta\in\mathbb{R}^6$. 
We compute the hyperbox sets $\Theta,\Omega\subseteq\mathbb{R}^6$ satisfying~\eqref{eq:Theta_set} numerically. 
For the discrete-time convex approach the  polytopic description  $\overline{\Theta}$~\eqref{eq:overline_theta} and the hyperbox description ${\Theta}\times \Omega$ (Remark~\ref{remark:box}) are considered. 
For the gridding, $(x_1,x_3,u_r,u_r^+)\in\mathbb{R}^4$ is gridded using $10^4$ points, of which approximately $8.000$ satisfy the conditions~\eqref{eq:grid_r} and are considered in the optimization problem~\eqref{eq:LMI}. 

The computational demand and the performance of the different methods are detailed in Table~\ref{tab:CSTR_comp}. 
As expected, the gridding approach yields the smallest and least conservative terminal cost. 
For this example, the convex discrete-time approach (Prop.~\ref{prop:LMI_lpv}) seems less favorable, which is mainly due to the simple description of the parameters. 
Due to the small sampling time $h$ and correspondingly small set $\Omega$, the more detailed description $\overline{\Theta}$ only marginally improves the performance but significantly increases the offline computational demand.  
Furthermore, the continuous-time formulation can be computed more efficiently. 
We note, that the parameters $Q,~R,~h$, are chosen, such that the continuous-time control law is also stabilizing for the discrete-time implementation. 
In particular, if $R$ is decreased or $h$ increased, the terminal ingredients based on the continuous-time formulation do not satisfy Assumption~\ref{ass:term} with a piece wise constant input.  
Such considerations are not necessary for the discrete-time formulation, compare Remark~\ref{rk:sample_hold} in Appendix~\ref{sec:app_cont}.

\begin{table*}
\small{
\begin{tabular}{c|c|c|c|c|cc}
\multirow{2}{*}{Method}
& \multicolumn{2}{c}{Continuous-Time}&\multicolumn{3}{c}{Discrete-time}\\
&Gridding (Lemma.~\ref{lemma:LMI_cont})&{Convex (Prop.~\ref{prop:LMI_lpv_cont})}&Gridding  (Lemma.~\ref{lemma:LMI})&
\multicolumn{2}{c}{Convex (Prop.~\ref{prop:LMI_lpv})}\\\hline
$\#$LMIs&$200$& $4^4=256$ &$7994$&$4^6=4096$&$6^6= 46.656$\\
computational time &12~s&10~s&783~s&356~s&3h~18min\\
\small{$\max_{r\in\mathcal{Z}_r}$$\lambda_{\max}(P_f(r))$}&$3.3\cdot10^3\cdot h$ &$8.3\cdot 10^4\cdot h$&$3.5\cdot 10^3$& $4\cdot 10^7$&$3.8\cdot 10^7$\\
\end{tabular}
\caption{Computational demand and conservatism of different offline computations - CSTR. }
\label{tab:CSTR_comp}
}
\end{table*}
In the following, we consider the discrete-time terminal ingredients based on Lemma~\ref{lemma:LMI}. 
Computing $\alpha_2=0.02$ using~\eqref{eq:alpha_2_better} requires $30$~s. 
Executing Algorithm~\ref{alg:offline_alpha} to ensure that $\alpha=0.02$ is valid takes $10$~min using  $2\cdot 20^4\cdot 100=3.2\cdot 10^7$ samples. 

In Figure~\ref{fig:CSTR_periodic} we can see an exemplary periodic trajectory and the corresponding terminal set\footnote{%
If $\alpha$ would be recomputed for the specific trajectory $r$, we would get $\alpha=0.1$. 
This conservatism is a result of the fact, that the previously computed value $\alpha$ needs to be valid for \textit{every} reachable reference trajectory (Ass.~\ref{ass:ref}). 
}. 
The period length is $T=1144$, which corresponds to $11.44~s$, compare~\cite[Sec.~3.4]{faulwasser2018economic}.
\begin{figure}[hbtp]
\begin{center}
\includegraphics[scale=0.5]{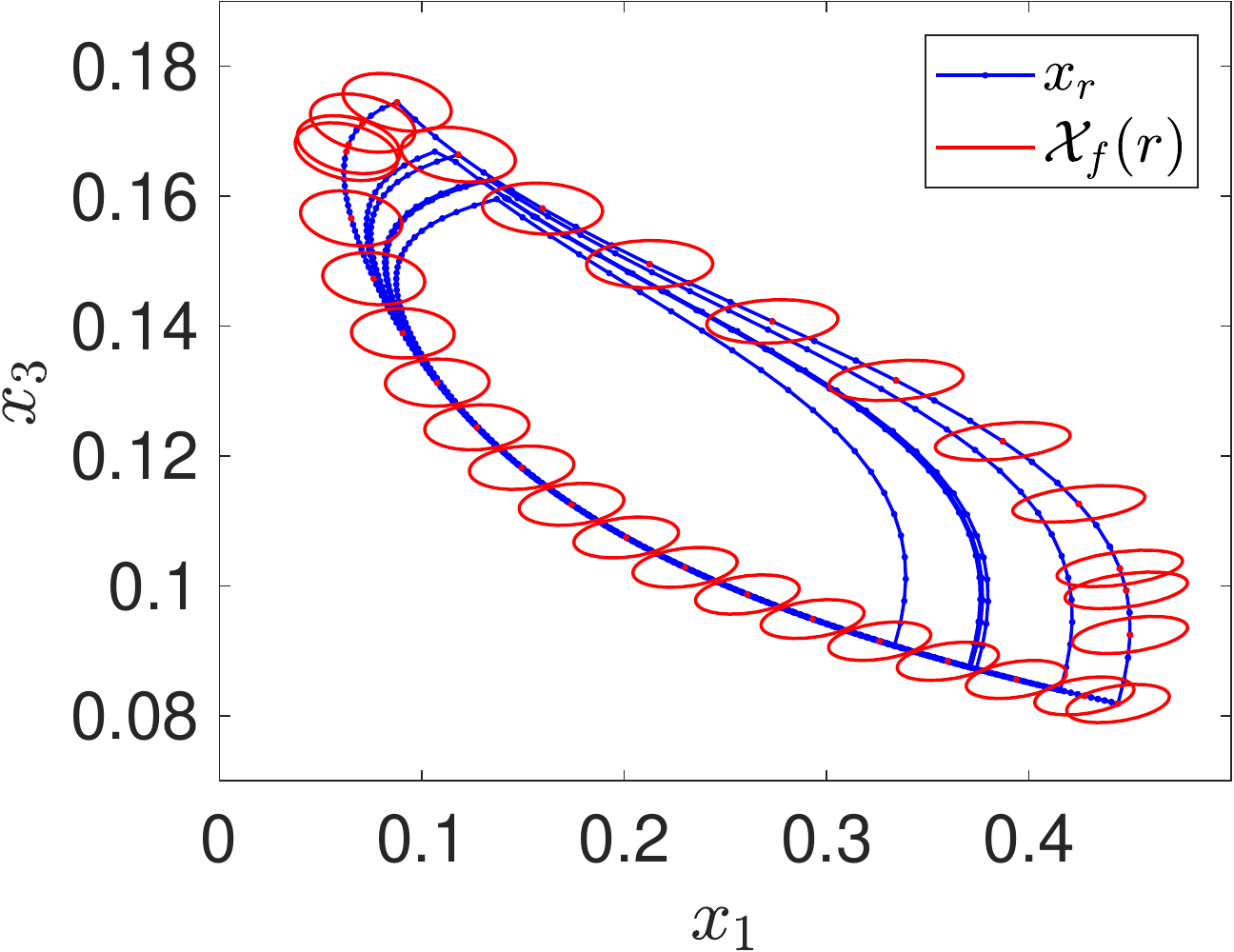}
\end{center}
\caption{Periodic trajectory - CSTR: Reference trajectory $r$ (blue) with terminal sets $\mathcal{X}_f(r)$ (red ellipses). }
\label{fig:CSTR_periodic}
\end{figure}

We wish to emphasize that this offline computation is only done once and requires no explicit knowledge of the specific trajectory or its period length $T$.
This is in contrast to the existing methods, such as~\cite{faulwasser2011model,aydiner2016periodic} which would compute terminal ingredients for a specific reference trajectory and thus could not deal with online changing operation conditions (e.g. due to changes in the price signal~\cite{ferramosca2014economic}).

\subsection{Automated driving - robust reference tracking}  
The following example shows the applicability of the proposed procedure to nonlinear robust reference tracking and demonstrates the performance improvement of including suitable terminal ingredients. 
\subsubsection*{System model} 
We consider a nonlinear kinematic bicycle model of a car 
\begin{align*}
\dot{z}_1=&v\cos(\psi+\beta),\quad 
\dot{z}_2=v\sin(\psi+\beta),\\
\dot{\psi}=&v/l_r\sin(\beta),\quad 
\dot{v}=a,\quad 
\dot{\delta}=u_{\delta},\\
\beta=&\tan^{-1}\left(\dfrac{l_r}{l_f+l_r}\tan(\delta)\right),\\
x=&[z_1,z_2,\psi,v,\delta]^{\top}\in\mathbb{R}^5,\quad u=[a,u_{\delta}]^{\top}\in\mathbb{R}^2,
\end{align*}
with the position $z_i$, the inertial heading $\psi$, the velocity $v$, the front steering angle $\delta$, the acceleration $a$ and the change in the steering angle $u_{\delta}$.  
The model constants $l_f=1.4$ and $l_r=1.5$ represent the distance of the center of mass to the front and rear axle. 
More details on kinematic bicycle models can be found in~\cite{kong2015kinematic}. 
The (non-compact) constraint sets are given by 
\begin{align*}
\mathcal{Z}_r=&\{v\in[10,50],a\in[-1,1],\delta\in[-0.4,0.4],u_{\delta}\in[-3,3]\},\\
\mathcal{Z}=&{\{v\in[5~,55],a\in[-2,2],\delta\in[-0.5,0.5],u_{\delta}\in[-6,6]\}}.
\end{align*}	 

\subsubsection*{Offline computations}
We consider the stage cost $Q=I_5$, $R= I_2$ and $\epsilon=0.1$ and use an Euler discretization with the step size $h=2ms$. 
Computing the linearization~\eqref{eq:A_r} and using a quasi-LPV parameterization~\eqref{eq:A_paramlin} results in $\theta\in\mathbb{R}^8$, where the parameters $\theta$ consist of trigonometric functions in $\Psi,~\delta$ and are linear in the velocity $v$.  

For this example, the convex approach (Prop.~\ref{prop:LMI_lpv}) is not feasible, since the simple and conservative hyperbox\footnote{%
This description does not take into account that $\sin(\psi+\beta)$ and $\cos(\psi+\beta)$ cannot be zero simultaneously. 
This issue can be circumvented by considering a more detailed description of $\Theta$, e.g. using coupled ellipsoidal constraints. } description $\theta\in\Theta$ includes linearized dynamics which are not stabilizable. 

For the gridding, we consider both the discrete-time and a continuous-time formulation (compare Appendix~\ref{sec:app_cont}). 
In the continuous-time formulation $a$ and $u_{\delta}$ enter the LMIs affinely. 
Thus, we only consider the $2^2=4$ vertices of $(a,u_{\delta})$ and grid $(\psi,v,\delta)$ using $10^3$ points. 
For the discrete-time formulation~\eqref{eq:LMI} the LMIs are not affine in $u_{\delta}$ and thus we grid $(\psi,v,\delta,u_{\delta})$ using $10^3\cdot 5$ points and consider the two vertices of $a$.    
The dimensions of the corresponding LMI-blocks are $(2n+m)\times (2n+m)=12\times 12$ and $(3n+m)\times (3n+m)=17\times 17$ , respectively.
The following table captures the weighting of the terminal cost and the computational effort of the proposed approach. 
\begin{tabular}{c|c|c}
\multirow{2}{*}{Method}
& {Continuous-Time}&{Discrete-time}\\
& (Lemma.~\ref{lemma:LMI_cont})&  (Lemma.~\ref{lemma:LMI}) \\\hline
$\#$LMIs-blocks&$10^3\cdot 2^2=4\cdot 10^3$  &$10^3\cdot 5\cdot 2=10^4$ \\\
comp.~time&14~min&33~min\\
\small{$\max_{r\in\mathcal{Z}_r}$$\lambda_{\max}(P_f(r))$}&$8.0\cdot 10^4\cdot h$&$8.4\cdot 10^4$
\end{tabular}  
\begin{remark}
\label{rk:cont_discrete}
For the considered example and parameters, the continuous-time terminal cost is also valid for a zero-order hold discrete-time implementation with $h=2~ms$. 
This is in general not the case. 
For example if $R=10^{-4}$ or $h=10~ms$ is chosen, the terminal ingredients based on the continuous-time offline optimization are not stabilizing for the discrete-time system. 
If the continuous-time offline procedure is used, the computation (and thus verification) of $\alpha$ for the discrete-time system using Algorithm~\ref{alg:offline_alpha} is crucial. 
This issue is also discussed in Remark~\ref{rk:sample_hold} of Appendix~\ref{sec:app_cont}. 
\end{remark}
In the following, we only consider the discrete-time terminal ingredients based on Lemma~\ref{lemma:LMI}. 
Executing Algorithm~\ref{alg:offline_alpha} to ensure that $\alpha_1=10^4$ is valid takes $25$~min using  $20^3\cdot 10^2\cdot 100= 8\cdot 10^7$ samples. 
 
\subsubsection*{Robust trajectory tracking - Evasive maneuver test}
In order to demonstrate the applicability of the proposed tracking MPC scheme, we consider an evasive maneuver test  (compare~ISO norm 3888-2~\cite{ISO3888}).  
In this scenario a car is driving with $v=20~m/s$ and performs two consecutive lane changes to simulate the avoidance of a possible obstacle. 
The basic setup, with a feasible reference trajectory $r$, additional path constraints\footnote{%
Ideally, these constraints should restrict the overall position of the vehicle. 
For simplicity we treat them as (time-varying) polytopic constraints on $z_2$, that require the $z_2$ position to be within a margin of $\pm 35~cm$. 
} $\mathcal{X}$ and the terminal set (projected on $z_1\times z_2$) can be seen in Figure~\ref{fig:elchtest}. The terminal set size is restricted by the input constraint on $u_\delta$ and the path constraint $\mathcal{X}$, yielding the terminal set size $\alpha=\alpha_2\approx 10^2$. 
For comparison, we also computed a terminal cost for this specific \textit{given} trajectory based on an LTV description~\cite{faulwasser2011model}. 
The generic offline computation results in a roughly five times larger terminal cost, which gives an indication of the conservatism.  

\begin{figure}[hbtp]
\begin{center}
\includegraphics[width=0.4\textwidth]{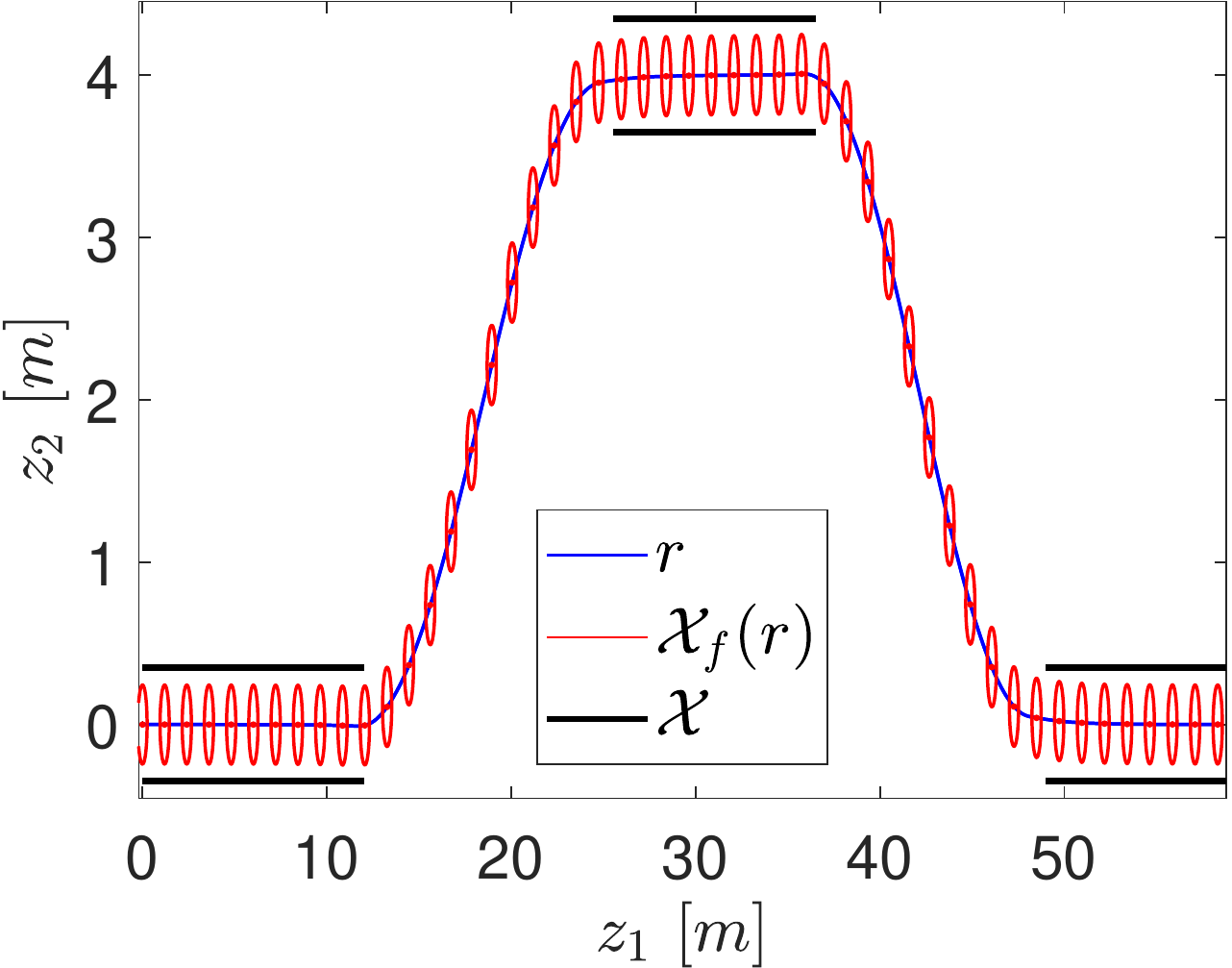}
\end{center}
\caption{ Evasive maneuver test: Reference trajectory $r$ (blue), terminal sets $\mathcal{X}_f(r)$ (red) and additional state constraints $\mathcal{X}$ (black).}
\label{fig:elchtest}
\end{figure}

In order to show that the proposed approach can be applied under realistic conditions, we consider additive disturbances $w(t)\in\mathbb{R}^n$ and a prediction horizon of $N=10$.  
To ensure robust constraint satisfaction, we use the constraint tightening method proposed in~\cite{kohler2018novel}, which is based on the achievable contraction\footnote{%
This property is verified by computing a terminal cost, which is valid on the full constraint set $\mathcal{Z}$, compare Prop.~\ref{prop:increm} and App.~\ref{sec:app_robust}. 
Analogous to the computation of $\alpha$, the numerical value of $\rho$ can be ascertained using Alg.~\ref{alg:offline_alpha}. 
} rate $\rho=0.9995$. 
To ensure robust recursive feasibility, the terminal set needs to be robust positively invariant, which can be ensured for $\|w(t)\|\leq \hat{w}=1.82\cdot 10^{-5}=9.1\cdot 10^{-3}h$, compare \eqref{eq:w_2} in Proposition~\ref{prop:RPI} of Appendix~\ref{sec:app_robust}. 
The constraints are tightened over the prediction horizon with a scalar using the method in~\cite{kohler2018novel}
\begin{align*}
(x(k|t),u(k|t))\in(1-\epsilon_{k})\mathcal{Z},
\quad \epsilon_k=\epsilon\frac{1-{\rho}^k}{1-{\rho}},
\end{align*}
with $\epsilon=2.5\cdot 10^{-4}$.
The resulting robust tracking MPC scheme guarantees (uniform) practical exponential stability and robust constraint satisfaction, for details see Appendix~\ref{sec:app_robust} and~\cite{kohler2018novel}.

We simulated the closed-loop MPC using random disturbances $\|w(t)\|=\hat{w}$ and compared the performance to MPC without terminal constraints ($V_f=0$, UC,~\cite{kohlernonlinear19}) and MPC with terminal equality constraint ($\mathcal{X}_f(r)=x_r$, TEC). 
To enable a comparison of the computational demand we fixed the number of iterations in CasADi to $1$ per time step, resulting in online computation time of approx $13$~ms  for all three approaches. 
The corresponding results can be seen in Figures~\ref{fig:Car_closedloop_1} and~\ref{fig:Car_closedloop_2}. 
\begin{figure}[hbtp]
\begin{center}
\includegraphics[width=0.5\textwidth]{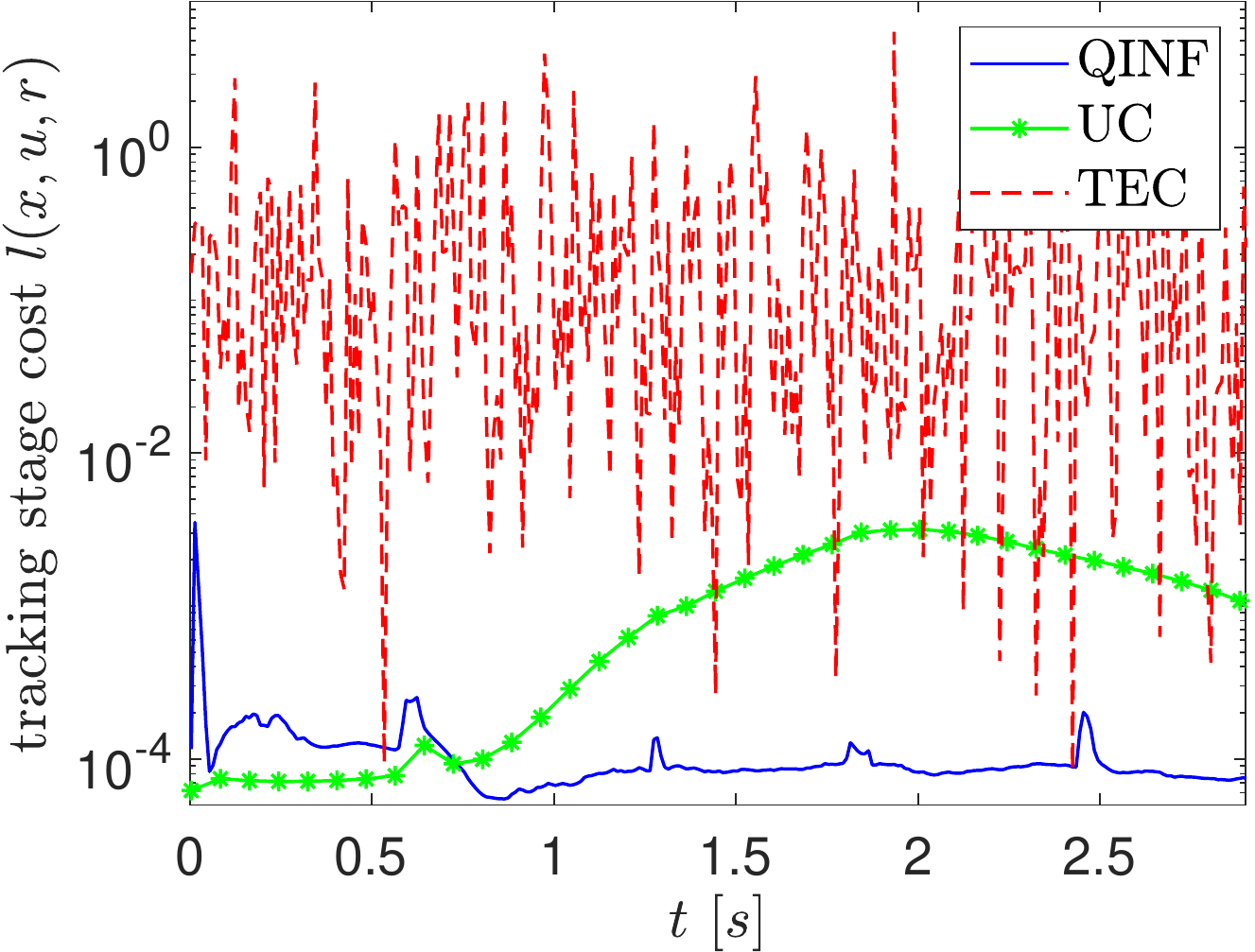}
\end{center}
\caption{ Evasive maneuver test: Closed-loop tracking stage cost for the proposed terminal constraint tracking MPC (blue,solid,QINF), a corresponding tracking MPC scheme without terminal constraints (green,dash-star,UC) and an MPC scheme with a terminal equality constraint (red,dashed,TEC)}
\label{fig:Car_closedloop_1}
\end{figure}
\begin{figure}[hbtp]
\begin{center}
\includegraphics[width=0.4\textwidth]{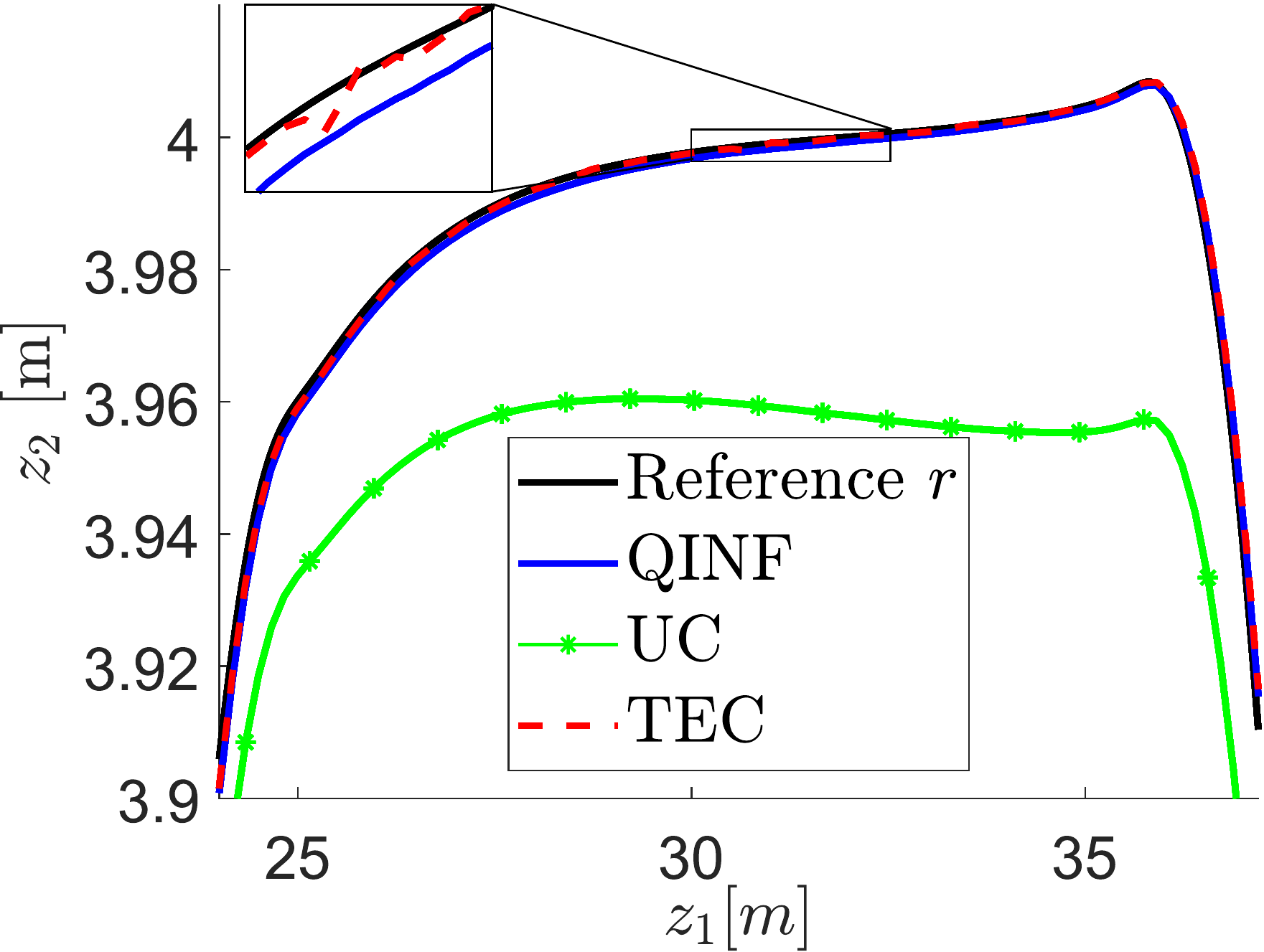}
\end{center}
\caption{Evasive maneuver test: Closed-loop trajectory of $z_1,~z_2$ over the time interval $t\in[1.32s,1.81s]$ with the reference $r$ (black,solid), the MPC based on the proposed terminal ingredients (blue,solid,QINF), a corresponding tracking MPC scheme without terminal constraints (green,dash-star,UC) and an MPC scheme with a terminal equality constraint (red,dashed,TEC). }
\label{fig:Car_closedloop_2}
\end{figure}

The closed-loop performance (as measured by the tracking stage cost
\footnote{%
If we ignore the input tracking stage cost and only consider $\|x-x_r\|_Q^2$ as the performance, then the TEC has only $13\%$ of the tracking error of QINF and UC has $30$-times the tracking error. 
If, for some reason, we would only be interested in the tracking error in the input $\|u-u_r\|_R^2$, then UC has only $48\%$ of the error of QINF and TEC has $4.5\cdot 10^3$ times the error of QINF.
}) of UC and TEC are $10$ and $3.000$ times larger than the proposed scheme with the terminal cost (QINF), compare Figure~\ref{fig:Car_closedloop_1}.  
Specifically, the MPC without terminal constraints (UC) has a significant (growing) tracking error in the position (see Figure~\ref{fig:Car_closedloop_2}), since the UC with a short horizon typically leads to a slower convergence with smaller control action (as stability is not explicitly enforced). 
On the other side, the terminal equality constraint MPC (TEC) has large deadbeat like input oscillations, which is a result of the terminal constraint with the short prediction horizon.  
UC and TEC achieve a similar performance to QINF with $N=10$, if the prediction horizon\footnote{%
For this second comparison, we did not limit the number of iterations for UC and TEC, since we were unable to achieve a similar performance with UC using only $1$ iterations (which may be due to the lack of a good warmstart).
} is increased to $N=23$ and $N=59$, respectively. 
This increases the online computational demand compared to QINF by $100\%$  and $300\%$, respectively.

The proposed MPC scheme robustly achieves a small tracking error with a short prediction horizon. 
This shows that including (suitable) terminal ingredients significantly reduces the tracking error and improves the closed-loop performance, as also articulated in~\cite{mayne2013apologia}. 

\section{Conclusion}
\label{sec:sum}
We have presented a procedure to compute terminal ingredients for nonlinear reference tracking MPC schemes offline. 
The main novelty in this approach is that the offline computation only needs to be done once, irrespective of the setpoint or trajectory to be stabilized. 
This is possible by computing parameterized terminal ingredients and approximating the nonlinear system locally as a quasi-LPV system, with the reference trajectory to be stabilized as the parameter.  
Furthermore, we have shown that the reference generic offline computation enables us to design nonlinear MPC schemes that ensure optimal periodic operation   despite online changing operation conditions. 
We have demonstrated the applicability and advantages of the proposed procedure with numerical examples.

The extension of the proposed procedure to large scale nonlinear distributed systems using a seperable formulation is part of future work.

\bibliographystyle{IEEEtran}  
\bibliography{Literature_short}  
\clearpage 
\appendix
In Appendix \ref{sec:app_increm}, the connection between \textit{incremental system properties} and the considered reference generic terminal ingredients are discussed. 
In Appendix~\ref{sec:app_robust}, these incremental stability properties are used to extend the approach to \textit{robust} reference tracking, by introducing a simple constraint tightening to ensure robust constraint satisfaction under additive disturbances.  
In Appendix~\ref{sec:app_cont}, the derivations for the reference generic offline computations (Prop.~\ref{prop:LMI_lpv}) are extended to \textit{continuous-time} systems.
In Appendix~\ref{sec:app_output}, the procedure is extended to nonlinear  \textit{output tracking stage costs}, for both discrete-time and continuous-time systems.

 \subsection{(Local) Incremental exponential stabilizability}
\label{sec:app_increm}
In the following we clarify the connection between incremental stabilizability properties and the terminal ingredients.
\begin{definition}
\label{def:increm_stab}
A set of reference trajectories $r$ specified by some dynamic inclusion $r(t+1)\in\mathcal{R}(r(t))$ is locally incrementally exponentially stabilizable for the system~\eqref{eq:sys},  if there exist constants $\rho\in(0,1),M,c>0$ and a control law $\kappa(x,r)$, such that for any initial condition
 satisfying $\|x(0)-x_r(0)\|\leq c$, the trajectory $x(t)$ with $x(t+1)=f(x(t),\kappa(x(t),r(t)))$ satisfies $\|x(t)-x_r(t)\|\leq M\rho^t\|x(0)-x_r(0)\|$,~$\forall t\geq 0$. 
\end{definition}
This definition is closely related to the concept of universal exponential stabilizability~\cite{manchester2017control}, which characterizes the stabilizability of arbitrary trajectories in continuous-time. 
One of the core differences in the definitions is the treatment of constraints, i.e. we study stabilizability of classes of trajectories $r$ that satisfy certain constraints, compare Assumption~\ref{ass:ref} and Remark~\ref{rk:ref}. 
This difference is crucial when discussing local versus global stabilizability and constrained control. 

The following proposition shows that the conditions in Lemma~\ref{lemma:lpv} directly imply local incremental exponential stabilizability of the reference trajectory. 
\begin{proposition}
\label{prop:increm}
Suppose that there exist matrices $P_f(r),~K_f(r)$ that satisfy the conditions in Lemma~\ref{lemma:lpv}. 
Then the control law $k_f(x,r)=u_r+K_f(r)(x-x_r)$ locally incrementally exponentially stabilizes any reference $r$ satisfying Assumption~\ref{ass:ref}.
\end{proposition}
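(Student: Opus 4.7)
The plan is to recognize that Lemma~\ref{lemma:lpv} already supplies a parameterized control Lyapunov function along the reference, so the proposition follows by converting the Lyapunov decrease into an explicit exponential contraction rate.

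\textbf{Step 1 (invariance of the terminal set along the reference).} Choose the local region by $c:=\sqrt{\alpha/c_l}$, so that $\|x(0)-x_r(0)\|\leq c$ implies $V_f(x(0),r(0))\leq c_u c^2 \cdot (c_l/c_l) \leq \alpha$ after tightening $c$ to $c:=\sqrt{\alpha/c_u}$; hence $x(0)\in\mathcal{X}_f(r(0))$. Using~\eqref{eq:term_dec} and $\ell\geq 0$, we get $V_f(x(1),r(1))\leq V_f(x(0),r(0))\leq \alpha$, so $x(1)\in\mathcal{X}_f(r(1))$. Inductively, $x(t)\in\mathcal{X}_f(r(t))$ for all $t\geq 0$, so the decrease inequality~\eqref{eq:term_dec} is applicable at every step.

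\textbf{Step 2 (contraction of $V_f$).} Combine the stage cost lower bound $\ell(x,k_f(x,r),r)\geq \lambda_{\min}(Q)\|x-x_r\|^2$ with the upper bound $V_f(x,r)\leq c_u\|x-x_r\|^2$ to obtain $\ell(x,k_f(x,r),r)\geq (\lambda_{\min}(Q)/c_u)V_f(x,r)$. Plugging this into~\eqref{eq:term_dec} yields
\begin{equation*}
V_f(x(t+1),r(t+1))\leq \rho^2 \, V_f(x(t),r(t)),\quad \rho^2:=1-\frac{\lambda_{\min}(Q)}{c_u}\in(0,1).
\end{equation*}
Iterating gives $V_f(x(t),r(t))\leq \rho^{2t} V_f(x(0),r(0))$.

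\textbf{Step 3 (translate to norm bound).} Using $c_l\|x-x_r\|^2\leq V_f(x,r)\leq c_u\|x-x_r\|^2$, the previous inequality implies
\begin{equation*}
\|x(t)-x_r(t)\|\leq M\rho^{t}\|x(0)-x_r(0)\|,\quad M:=\sqrt{c_u/c_l},
\end{equation*}
which is precisely Definition~\ref{def:increm_stab} with the triple $(\rho,M,c)$ just identified.

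The proof is essentially a Lyapunov-function argument, and none of the steps presents a real obstacle; the only care required is in choosing $c$ so that the initial condition lies in the terminal set (guaranteeing applicability of~\eqref{eq:term_dec}) and in making sure the quadratic sandwich bounds on $V_f$ are invoked consistently to produce a clean exponential rate. Note that Assumption~\ref{ass:ref} is used only to ensure that $r^+\in\mathcal{R}(r)$ holds along the reference, so that~\eqref{eq:term_dec} indeed applies at every time step.
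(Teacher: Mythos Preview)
Your proof is correct and essentially identical to the paper's: both choose $c=\sqrt{\alpha/c_u}$ to land in $\mathcal{X}_f(r(0))$, derive the contraction $V_f(x^+,r^+)\leq\rho^2 V_f(x,r)$ with $\rho^2=1-\lambda_{\min}(Q)/c_u$, and convert to the norm estimate with $M=\sqrt{c_u/c_l}$. The only cosmetic issue is the momentary $c=\sqrt{\alpha/c_l}$ in Step~1 before you correct it; just state $c=\sqrt{\alpha/c_u}$ from the outset.
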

\begin{proof}
The following proof follows the arguments of~\cite[Prop.~1,2]{kohlernonlinear19}. 
For any $\|x(0)-x_r(0)\|\leq c$ with $c=\sqrt{\alpha/c_{u}}$, we have $x(0)\in\mathcal{X}_f(r)$, with $\alpha,~c_{u}$ according to Lemma~\ref{lemma:lpv}. 
Thus, the terminal cost $V_f(x,r)$ is a local incremental Lyapunov function that satisfies 
\begin{align*}
V_f(x(t+1),r(t+1))\leq \rho^2 V_f(x(t),r(t)),~ \rho^2=1-\dfrac{\lambda_{\min}(Q)}{c_u},
\end{align*}
and thus 
\begin{align*}
\|x(t)-x_r(t)\|\leq {\rho}^tM \|x_r(0)-x(0)\|,\quad M=\sqrt{{c_u}/{c_l}}.
\end{align*}
\end{proof}
\begin{remark}
\label{rk:increm}
This result establishes local incremental stabilizability with the incremental Lyapunov function $V_f(x,r)$ based on properties of the linearization, compare~\cite[Prop.~1]{kohlernonlinear19}. 
This system property is a natural extension of previous works on incremental stability and corresponding incremental Lyapunov functions, see~\cite{angeli2002lyapunov,tran2016incremental},~\cite[Ass.~1]{kohlernonlinear19}.  
This property implies stabilizability of $(A(r),~B(r))$ around any (fixed) steady-state $r^+=r$, but it does not necessarily imply stabilizability of $(A(r),~B(r))$ for arbitrary $r\in\mathcal{Z}_r$, as $P_f(r)$ might decrease along the trajectory. 

For continuous-time systems, an analogous result exists based on contraction metrics and universal stabilizability~\cite{manchester2017control}. 
\end{remark}
The following proposition shows that in the absence of constraints we recover non-local results similar to~\cite{manchester2017control}. 
\begin{proposition}
\label{prop:univ}
Consider $\mathcal{Z}_r=\mathcal{Z}=\mathbb{R}^{n+m}$. 
Suppose that there exist matices $P_f(r),~K_f(r)$ that satisfy the conditions in Lemma~\ref{lemma:lpv}. 
Assume further that $c_lI\leq P_f(r)\leq c_uI$ for all $r\in\mathbb{R}^{n+m}$ with some constants $c_l,~c_u$ and $K_f(r)=K(x_r)$. 
Then any  reference $r$ satisfying Assumption~\ref{ass:ref} is exponentially incrementally stabilizable with the control law 
\begin{align*}
\kappa(x,r)=K(x)x-K(x_r)x_r+u_r,
\end{align*} 
i.e., for any initial condition $x(0)\in\mathbb{R}^n$ the state trajectory $x(t+1)=f(x(t),\kappa(x(t),r(t)))$ satisfies $\|x(t)-x_r(t)\|\leq M\rho^t\|x(0)-x_r(0)\|$. 
\end{proposition}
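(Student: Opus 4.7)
The plan is to lift the local incremental stability result of Proposition~\ref{prop:increm} to a global one by exploiting the absence of constraints and the uniform bounds $c_l I \leq P_f(r) \leq c_u I$. I would still use $V_f(x,r) = \|x - x_r\|_{P_f(r)}^2$ as an incremental Lyapunov function candidate and aim to prove the one-step decrease $V_f(x^+, r^+) \leq \rho^2 V_f(x,r)$ with $\rho^2 = 1 - \lambda_{\min}(Q)/c_u$, now on all of $\mathbb{R}^n$. Iterating this bound and using the equivalence $c_l\|x-x_r\|^2 \leq V_f(x,r) \leq c_u\|x-x_r\|^2$ then yields the stated exponential bound with $M = \sqrt{c_u/c_l}$.

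The key technical step is to replace the local Taylor-plus-Lipschitz argument used in the proof of Lemma~\ref{lemma:lpv} by a global path-integration argument. I would write the one-step tracking error as
\begin{align*}
x^+ - x_r^+ = \int_0^1 \bigl[A(\hat x(s),\hat u(s))\,\dot{\hat x}(s) + B(\hat x(s),\hat u(s))\,\dot{\hat u}(s)\bigr]\, ds,
\end{align*}
along the path $\hat x(s) = x_r + s(x-x_r)$, $\hat u(s) = \kappa(\hat x(s), r) = K(\hat x(s))\hat x(s) - K(x_r)x_r + u_r$, which connects $(x_r,u_r)$ to $(x,\kappa(x,r))$. At every $s$, the pair $(\hat x(s),\hat u(s))$ is itself an admissible virtual reference (since $\mathcal{Z}_r = \mathbb{R}^{n+m}$), so Lemma~\ref{lemma:lpv} provides a differential contraction in the metric $P_f$ with gain $K(\hat x(s))$. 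Combining these infinitesimal contractions over $s \in [0,1]$ together with the uniform bounds on $P_f$ should yield the desired global Lyapunov decrease.

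The main obstacle is reconciling the differential control variation induced by the path, namely $\dot{\hat u}(s) = \bigl(K(\hat x(s)) + K'(\hat x(s))\hat x(s)\bigr)(x - x_r)$, with the gain $K(\hat x(s))$ that the LMI certifies as contracting: the path integrates the Jacobian of the map $x \mapsto K(x)x$, not $K(x)$ alone. Making the mismatch produce no net effect --- either by cancellation across the $s$-integration or by absorbing it into the contraction margin $\epsilon$ uniformly in $(x,x_r)$ --- is the delicate part of the argument, and is precisely where the unconstrained assumption and the uniform eigenvalue bounds on $P_f$ have to be used. Once this step is settled, the remaining work (iterating the Lyapunov decrease and applying the metric equivalence) reduces to standard Lyapunov bookkeeping, exactly as in the proofs of Theorem~\ref{thm:MPC} and Proposition~\ref{prop:increm}.
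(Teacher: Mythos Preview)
Your approach differs substantially from the paper's. The paper does not attempt a direct path-integration argument for $V_f$; instead it introduces the pre-stabilized auxiliary system $\tilde f(x,v)=f(x,K(x)x+v)$ and observes that with $v_r(t)=u_r(t)-K(x_r(t))x_r(t)$ both the controlled trajectory $x(t)$ (under $\kappa$) and the reference $x_r(t)$ evolve according to the \emph{same} time-varying map $x\mapsto\tilde f(x,v_r(t))$. Global exponential incremental stabilizability of $r$ is then recast as contractivity of this single time-varying system, which the paper certifies with the time-dependent metric $P_f(r(t))$ by appealing to standard contraction theory~\cite{lohmiller1998contraction}. The pre-stabilization step is the key device: it removes the distinction between ``plant'' and ``reference'' and turns the stabilization problem into an incremental stability problem, which is the idea missing from your outline.

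The obstacle you flag in your own argument is real, and neither of your proposed fixes works. Along your path the integrand carries the closed-loop Jacobian $A(\hat r(s))+B(\hat r(s))\bigl(K(\hat x(s))+K'(\hat x(s))\hat x(s)\bigr)$, whereas inequality~\eqref{eq:lpv} only certifies contraction of $A+BK$; the surplus $B\,K'(\hat x)\hat x$ is not covered. Cancellation over $s$ cannot rescue this because the contraction bound is quadratic in the Jacobian, so the cross terms do not integrate to zero. Absorbing the mismatch into the margin $\epsilon$ would require $\|B(\hat r)K'(\hat x)\hat x\|$ to be uniformly bounded on all of $\mathbb R^n$; the assumptions bound $\|K\|$ (via $K^\top RK\le P_f\le c_uI$) but say nothing about $K'$, and the uniform eigenvalue bounds on $P_f$ do not supply such control either. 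Without extra structure on $K$ (for constant $K$ the issue disappears, cf.\ Remark~\ref{rk:track_vs_stab}) your route does not close, whereas the paper's pre-stabilization viewpoint sidesteps the need to bound this term explicitly.
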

\begin{proof}
Consider an auxiliary (pre-stabilized) system defined by $\tilde{f}(x,v)=f(x,K(x)x+v)$. 
Consider a reference $r$ generated by some input trajectory $u_r$ with the system dynamics~\eqref{eq:sys} (Ass.~\ref{ass:ref}) and some initial condition $x_r(0)$ resulting in the state reference $x_r$. 
Now, consider a reference $\tilde{r}$ generated by the input $v_r(t)=u_r(t)-K(x_r(t))x_r(t)$ with the system dynamics according to $\tilde{f}$ and the same initial condition. 
Due to the definition of the auxiliary system we have  $\tilde{x}_r(t)=x_r(t)$, $\forall t\geq 0$. 
For an arbitrary, but fixed input $v$, stability of the reference trajectory is equivalent to contractivity of the nonlinear time-varying system $\tilde{f}(x,t)$. 
This can be established with the contractivity metric $P_f(r(t))=P_f(x_r(t),t)$, compare~\cite{lohmiller1998contraction}. 
\end{proof}
In the absence of constraints, it is crucial that $P_f$ has a constant lower and upper bound.  
If the matrix $K_f$ depends on the full reference $r$ (not just $x_r$), the controller $\kappa$ in Proposition~\ref{prop:univ} is not necessarily well defined. 
\begin{remark}
\label{rk:track_vs_stab}
The relation between the controller $k_f$ (Prop.~\ref{prop:increm}) and $\kappa$ (Prop.~\ref{prop:univ}), is that of reference tracking versus pre-stabilization. 
The first one is more natural in the context of tracking MPC and contains existing results for the design of terminal ingredients as special cases~\cite{chen1998quasi,faulwasser2011model,aydiner2016periodic}. 
The second controller $\kappa$ allows for non-local stability results and is more suited for unconstrained control problems~\cite{manchester2017control}. 
For constant matrices $K$ the two controllers are equivalent, but the incremental Lyapunov functions (and thus terminal costs) are differently parameterized ($P_f(r)$, $P_f(x,u)$).  
\end{remark}
\begin{remark}
\label{rk:term_other}
The problem of computing reference generic terminal ingredients is equivalent to computing an incrementally stabilizing controller and is thus strongly related to the computation of robust positive invariant (RPI) tubes in nonlinear robust MPC schemes, compare~\cite{bayer2013discrete,kohler2018novel}. 
For comparison, in~\cite{yu2010robust,yu2013tube} constant matrices $P_f,~K_f$ are computed that certify incremental stability for continuous-time systems (by considering small Lipschitz nonlinearities or by describing the linearization as a convex combination of different linear systems). 
This approach can be directly extended to more general nonlinear systems using the proposed terminal ingredients. 
In particular, by changing the stage cost to 
\begin{align*}
\ell(x,u,r)=\|u-u_r+K(x_r)x_r-K(x)x\|_R^2  
\end{align*}
 one can design a nonlinear version of~\cite{chisci2001systems}, compare also~\cite{kohler2018novel}. 
A detailed description of a corresponding nonlinear robust tube based (tracking) MPC scheme based on incremental stabilizability can be found in Appendix~\ref{sec:app_robust}. 
\end{remark}
\begin{remark}
\label{rk:non_quadratic}
In case a system is not exponentially stabilizable~\cite{muller2017quadratic}, it might be possible to make a nonlinear transformation resulting in a quadratically stabilizable system,  (see for example nonlinear systems in normal form~\cite{isidori2013nonlinear}).  
\end{remark}

\subsection{Robust reference tracking} 
\label{sec:app_robust}
In the following, we summarize the theoretical results for robust reference tracking based on the reference generic terminal ingredients and~\cite{kohler2018novel}, where robust setpoint stabilization without terminal constraints was considered. 
This method is applicable to nonlinear incrementally stabilizable systems (Sec.~\ref{sec:app_increm}) with polytopic constraints and additive disturbances and can be thought of as a nonlinear version of~\cite{chisci2001systems}. 
\subsubsection{Setup}
We consider nonlinear discrete-time systems subject to additive bounded disturbances and polytopic constraints
\begin{align*}
x(t+1)=&f(x(t),u(t))+w(t),\quad \|w\|\leq \hat{w},\\ 
\mathcal{Z}=&\{r\in\mathbb{R}^{n+m}|~L_j r\leq 1,\quad j=1,\dots,q\}. 
\end{align*}
\subsubsection{Incremental stabilizability}
\begin{assumption}
\label{ass:contract} \cite[Ass.~1]{kohler2018novel}\cite[Ass.~1]{kohlernonlinear19}
There exist a control law $\kappa:\mathbb{R}^n\times\mathcal{Z}\rightarrow\mathbb{R}^m$, an incremental Lyapunov function $V_{\delta}:\mathbb{R}^n\times \mathcal{Z}\rightarrow\mathbb{R}_{\geq 0}$, that is continuous in the first argument and satisfies $V_{\delta}(x_r,x_r,u_r)=0$ for all $(x_r,u_r)\in\mathcal{Z}$, and parameters $c_{\delta,l},~c_{\delta,u},~\delta_{\text{loc}},~c_{j}\in\mathbb{R}_{>0}$, $\rho\in(0,1)$, such that the following properties hold for all $(x,x_r,u_r)\in\mathbb{R}^n\times\mathcal{Z}$, $r^+=(x_r^+,u_r^+)\in\mathcal{Z}$ with $V_{\delta}(x,r)\leq \delta_{\text{loc}}$: 
\begin{subequations}
\begin{align}
c_{\delta,l}\|x-x_r\|^2\leq V_{\delta}(x,x_r,u_r)\leq& c_{\delta,u}\|x-x_r\|^2,\\
\label{eq:lipschitz}
L_j (x-x_r,\kappa(x,x_r,u_r)-u_r)\leq& c_{j}\sqrt{V_{\delta}(x,x_r,u_r)},\\
\label{eq:contract}
V_{\delta}(x^+,x_r^+,u_r^+)\leq& \rho^2 V_{\delta}(x,x_r,u_r),
\end{align}
\end{subequations}
 with $x^+=f(x,\kappa(x,x_r,u_r))$, $x_r^+=f(x_r,u_r)$, $j=1,\dots,q$. 
\end{assumption}
This assumption implies incremental stabilizability (Def.~\ref{def:increm_stab}) for all feasible trajectories $r$, i.e., $r(t+1)\in\mathcal{R}(r(t))$ (Ass.~\ref{ass:ref}). 
For $\kappa(x,x_r,u_r)=u_r$ this reduces to incremental stability and correspondingly the robust MPC method in~\cite{bayer2013discrete} can also be used. 
This assumption can be verified by using Algorithm~\ref{alg:offline} to compute a terminal cost that is valid on $\mathcal{Z}$, compare Proposition~\ref{prop:increm}. 
The contraction rate $\rho$~\eqref{eq:contract}, is used to design a generic constraint tightening to ensure robust constraint satisfaction. 
The condition~\eqref{eq:lipschitz} is satisfied if the control law $\kappa$ is locally Lipschitz continuous, compare also~\cite{kohler2018novel}. 
\subsubsection{Constraint tightening} The constraints are tightened using the following scalar operations
\begin{align*}
\epsilon_j=&c_{j}\sqrt{c_u}\hat{w},\quad 
\epsilon_{j,k}=\dfrac{1-{\rho}^k}{1-{\rho}}\epsilon_j, k=0,\dots,N,\\
\mathcal{Z}_k=&\{r\in\mathbb{R}^n|~L_j r\leq 1-\epsilon_{j,k},\quad j=1,\dots, q\}. 
\end{align*}
The following bound on the disturbance is required to ensure that the tightened constraints are non-empty, i.e., $0\in\text{int}\left(\mathcal{Z}_N\right)$:
\begin{align}
\label{eq:w_1}
\hat{w}<\dfrac{1}{\max_jc_j}\dfrac{1}{\sqrt{c_u}}\dfrac{1-\rho^N}{1-\rho},
\end{align}
\subsubsection{Terminal ingredients}
In~\cite{kohler2018novel} the robust constraint tightening is considered for an MPC scheme without terminal constraints, compare Remark~\ref{rk:withoutterm}.
Some details regarding the extension/modification of the robust MPC scheme to a setting with terminal constraints are based on~\cite{hertneck2018learning}. 
\begin{assumption}
\label{ass:term_dist} 
There exist matrices $K_f(r)\in\mathbb{R}^{m\times n}$, $P_f(r)\in\mathbb{R}^{n\times n}$ with $c_l I_n\leq P_f(r)\leq c_u I_n$, a terminal set $\mathcal{X}_f(r)=\{x\in\mathbb{R}^n|~V_f(x,r)\leq \alpha_w\}$ with the terminal cost $V_f(x,r)=\|x-x_r\|_{P_f(r)}^2$, such that the following properties hold for any $r\in\mathcal{Z}_r$, any $x\in\mathcal{X}_f(r)$, any $r^+\in\mathcal{R}(r)$ and any $w\in\mathcal{W}_N$
\begin{subequations}
\label{eq:term_dist}
\begin{align}
\label{eq:term_dist_dec}
V_f(x^+,r^+)\leq& V_f(x,r)- \ell(x,k_f(x,r),r),\\
\label{eq:term_dist_RPI}
V_f(x^++w,r^+)\leq& \alpha_w,\\
\label{eq:term_dist_con}
(x,k_f(x,r))\in&\mathcal{Z}_N,
\end{align}
\end{subequations}
with $x^+=f(x,k_f(x,r))$, $k_f(x,r)=u_r+K_f(r)\cdot (x-x_r)$, $\mathcal{W}_N=\{w\in\mathbb{R}^n|~\|w\|\leq \hat{w}_N= \hat{w} \rho^N\sqrt{c_{\delta,u}/c_{\delta,l}} \}$, and positive constants $c_l,~c_u,~\alpha_w$. 
\end{assumption}
Compared to the nominal case (Ass.~\ref{ass:term}), we have a smaller terminal set size $\alpha_w$ due to the tightened constraints~\eqref{eq:term_dist_con} and an RPI condition that needs to be verified~\eqref{eq:term_dist_RPI}. 
Due to the quadratic nature of the terminal cost and the stage cost, \eqref{eq:term_dist_dec} implies $V_f(x^+,r^+)\leq \rho_f^2V_f(x,r,)$, with some $\rho_f\in(0,1)$, e.g. $\rho_f=1-\lambda_{\min}(Q)/c_u$.
\begin{proposition}
\label{prop:RPI}
Let Assumption~\ref{ass:term} hold and assume that $\hat{w}$ satisfies~\eqref{eq:w_1}. 
Then the terminal ingredients (Ass.~\ref{ass:term}) satisfy~\eqref{eq:term_dist_con} with a positive constant $\alpha_w$.
Suppose further that
\begin{align}
\label{eq:w_2}
\hat{w}\leq\sqrt{\dfrac{\alpha_w c_{\delta,l}}{c_{\delta,u}c_{u}}}\dfrac{1-\rho_f}{\rho^N}.
\end{align}
Then~\eqref{eq:term_dist_RPI} and thus Assumption~\ref{ass:term_dist} is satisfied.   
\end{proposition}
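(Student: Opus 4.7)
The proposition has two independent conclusions; I would address them in sequence.

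\textbf{Step 1 (Tightened constraint satisfaction, \eqref{eq:term_dist_con}).} The hypothesis \eqref{eq:w_1} is engineered to make $\epsilon_{j,N} < 1$ for $j=1,\dots,q$, so that $\mathcal{Z}_N$ has nonempty interior containing the origin. Assumption~\ref{ass:ref} together with compactness of $\mathcal{Z}_r$ gives a strict positive margin between $\mathcal{Z}_r$ and $\partial \mathcal{Z}$, which carries over to $\mathcal{Z}_r\subseteq\text{int}(\mathcal{Z}_N)$ (possibly after further restricting $\hat w$; this uses the polytopic structure). Then I would directly recycle Part~II of the proof of Lemma~\ref{lemma:lpv} with $\mathcal{Z}_N$ in place of $\mathcal{Z}$: writing $\Delta x=x-x_r$, $\Delta u=K_f(r)\Delta x$, the terminal constraint $V_f(x,r)\leq\alpha_w$ combined with $P_f(r)\geq c_l I_n$ and $\|K_f(r)\|\leq k_u$ from \eqref{eq:k_u} shows $(\Delta x,\Delta u)\in\mathcal{B}(\alpha_w)$, a ball of radius $\sqrt{\alpha_w(1+k_u^2)/c_l}$. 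Choosing $\alpha_w>0$ small enough so that $\mathcal{Z}_r\oplus\mathcal{B}(\alpha_w)\subseteq\mathcal{Z}_N$ (analogous to \eqref{eq:alpha_2}) yields \eqref{eq:term_dist_con}.

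\textbf{Step 2 (Robust invariance, \eqref{eq:term_dist_RPI}).} This is a quadratic-norm triangle-inequality computation. Setting $x^+=f(x,k_f(x,r))$, I would apply the triangle inequality in $\|\cdot\|_{P_f(r^+)}$:
\begin{align*}
\sqrt{V_f(x^++w,r^+)}&\leq \sqrt{V_f(x^+,r^+)}+\|w\|_{P_f(r^+)}\\
&\leq \sqrt{V_f(x^+,r^+)}+\sqrt{c_u}\,\|w\|.
\end{align*}
The decrease property \eqref{eq:term_dist_dec}, via the standard bound $\ell(x,k_f(x,r),r)\geq \lambda_{\min}(Q)\|x-x_r\|^2 \geq (\lambda_{\min}(Q)/c_u)V_f(x,r)$, gives $V_f(x^+,r^+)\leq \rho_f^2 V_f(x,r)\leq \rho_f^2\alpha_w$, so for $w\in\mathcal{W}_N$ with $\|w\|\leq \hat w\rho^N\sqrt{c_{\delta,u}/c_{\delta,l}}$,
\begin{align*}
\sqrt{V_f(x^++w,r^+)}\leq \rho_f\sqrt{\alpha_w}+\sqrt{c_u}\,\hat w\rho^N\sqrt{c_{\delta,u}/c_{\delta,l}}.
\end{align*}
The bound \eqref{eq:w_2} on $\hat w$ is precisely what forces the second term to be at most $(1-\rho_f)\sqrt{\alpha_w}$, so the right-hand side is $\leq \sqrt{\alpha_w}$, establishing~\eqref{eq:term_dist_RPI}.

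\textbf{Main obstacle.} Step~2 is essentially a one-line calculation once $\rho_f$ is extracted from \eqref{eq:term_dist_dec}, and the form of \eqref{eq:w_2} suggests it directly. The real work is in Step~1: one has to argue that $\mathcal{Z}_r$ is \emph{strictly} inside the tightened polytope $\mathcal{Z}_N$ under the stated bound \eqref{eq:w_1}, because otherwise no positive $\alpha_w$ can satisfy \eqref{eq:term_dist_con}. This is where the interplay between the compactness/interior part of Assumption~\ref{ass:ref}, the polytopic constraint description, and the explicit form of $\epsilon_{j,N}$ in terms of $\hat w$ enters, and I expect it to be the most delicate point to justify cleanly rather than the more mechanical calculations in Step~2.
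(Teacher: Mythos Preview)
Your proposal is correct and follows essentially the same route as the paper. For Step~2 the paper writes the bound in expanded quadratic form,
\[
V_f(x^++w,r^+)\leq V_f(x^+,r^+)+2\sqrt{c_u V_f(x^+,r^+)}\,\hat w_N+c_u\hat w_N^2,
\]
which is just the square of your triangle inequality $\sqrt{V_f(x^++w,r^+)}\leq \sqrt{V_f(x^+,r^+)}+\sqrt{c_u}\,\hat w_N$; both reduce to the same sufficient condition $\hat w_N\leq (1-\rho_f)\sqrt{\alpha_w/c_u}$, which is exactly~\eqref{eq:w_2}. For Step~1 the paper is equally terse: it only records that \eqref{eq:w_1} gives $0\in\text{int}(\mathcal{Z}_N)$ and then defers to the argument in Lemma~\ref{lemma:lpv} (Part~II) and the computation~\eqref{eq:alpha_2_better}, so your flagged ``main obstacle''---showing $\mathcal{Z}_r\subseteq\text{int}(\mathcal{Z}_N)$ so that a positive $\alpha_w$ exists---is indeed the point that is treated most informally in both versions.
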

\begin{proof}
Condition~\eqref{eq:term_dist_dec} directly follows fom Assumption~\ref{ass:term}. 
Inequality~\eqref{eq:w_1} ensures that $0\in\text{int}(\mathcal{Z}_N)$, which in combination with the quadratic bounds on $V_f$ and linear bounds on $k_f$ ensures that~\eqref{eq:term_dist_con} is satisfied for some positive constant $\alpha_w$, compare the proof of Lemma~\ref{lemma:lpv}, Algorithm~\ref{alg:offline_alpha} and the optimization problem~\eqref{eq:alpha_2_better} for the computation of $\alpha_w$.   

Using the quadratic nature of the terminal cost, a sufficient condition for~\eqref{eq:term_dist_RPI} is given by
\begin{align*}
&V_f(x^++w,r^+)\\
\leq& V_f(x^+,r^+)+2\sqrt{c_uV_f(x^+,r^+)}\hat{w}_N+c_u \hat{w}_N^2\leq \alpha_w,
\end{align*}
with $\|w\|\leq \hat{w}_N$. 
Using the contraction rate $\rho_f$ to bound $V_f(x^+,r^+)\leq \rho_f^2\alpha_w$, this condition reduces to $\hat{w}_N\leq (1-\rho_f)\sqrt{\alpha_w/c_u}$. 
The inequality on $\hat{w}$ follows from the definition of $\hat{w}_N$. 
\end{proof}
\subsubsection{Robust tracking MPC}
The robust tracking MPC is based on the following MPC optimization problem
\begin{subequations}
\label{eq:robust_MPC}
\begin{align}
V(x(t),r(\cdot|t))=\min_{u(\cdot|t)}&J_N(x(\cdot|t),u(\cdot|t),r(\cdot|t))\\
\text{s.t. }&x(k+1|t)=f(x(k|t),u(k|t)),\\
&x(0|t)=x(t),\\
&(x(k|t),u(k|t))\in\mathcal{Z}_k,\\
&x(N|t)\in\mathcal{X}_f({r}(N|t)).
\end{align}
\end{subequations}
Compared to~\eqref{eq:MPC}, in this optimization problem the state and input constraints are tightened. 
\subsubsection{Theoretical guarantees}
\begin{theorem}
\label{thm:robust}
Let Assumptions~\ref{ass:ref},~\ref{ass:contract} and~\ref{ass:term_dist}  hold. 
Assume further that $\hat{w}\leq \sqrt{\delta_{\text{loc}}/c_{\delta,u}}$ and that~\eqref{eq:robust_MPC} is feasible at $t=0$. 
The optimization problem~\eqref{eq:robust_MPC} is recursively feasible and the tracking error $e_r=0$ is (uniformly) practically exponentially stable for the resulting closed-loop system~\eqref{eq:close}. 
\end{theorem}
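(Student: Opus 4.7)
The plan is to adapt the standard tube-based robust MPC argument (along the lines of~\cite{kohler2018novel,hertneck2018learning}) to the parameterized terminal ingredients of Assumption~\ref{ass:term_dist}. The two pillars are (i) recursive feasibility via an explicit candidate construction at time $t+1$, and (ii) a perturbed Lyapunov decrease on the value function~$V$. The condition $\hat{w} \leq \sqrt{\delta_{\text{loc}}/c_{\delta,u}}$ serves only to keep the candidate trajectory inside the region of validity of Assumption~\ref{ass:contract}.

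For the candidate, I would first extend the optimal input by $u^*(N|t) := k_f(x^*(N|t), r(N|t))$ and then, starting from $z(0|t+1) = x(t+1) = x^*(1|t) + w(t)$, recursively define $v(k|t+1) = \kappa(z(k|t+1), x^*(k+1|t), u^*(k+1|t))$ for $k = 0, \ldots, N-1$. Since $\|w(t)\| \leq \hat{w}$, we have $V_\delta(z(0|t+1), x^*(1|t), u^*(0|t)) \leq c_{\delta,u}\hat{w}^2$, and iterating~\eqref{eq:contract} yields $V_\delta \leq \rho^{2k} c_{\delta,u}\hat{w}^2$ along the candidate. The Lipschitz-type inequality~\eqref{eq:lipschitz}, combined with $(x^*(k+1|t), u^*(k+1|t)) \in \mathcal{Z}_{k+1}$ (which at $k = N-1$ uses~\eqref{eq:term_dist_con}), gives
\[
L_j (z(k|t+1), v(k|t+1)) \leq (1 - \epsilon_{j,k+1}) + \rho^{k} \epsilon_j = 1 - \epsilon_{j,k},
\]
so the tightened constraints hold at every step. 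For the terminal constraint I would use $\|z(N|t+1) - x^*(N+1|t)\| \leq \hat{w}_N = \rho^N \sqrt{c_{\delta,u}/c_{\delta,l}}\,\hat{w}$, where $x^*(N+1|t) := f(x^*(N|t), k_f(x^*(N|t), r(N|t)))$ is the nominal terminal successor; the robust invariance~\eqref{eq:term_dist_RPI}, applied at $x^*(N|t) \in \mathcal{X}_f(r(N|t))$ with disturbance $w_N = z(N|t+1) - x^*(N+1|t)$, then delivers $z(N|t+1) \in \mathcal{X}_f(r(N+1|t)) = \mathcal{X}_f(r(N|t+1))$, closing feasibility.

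For practical exponential stability, I would upper-bound the candidate cost by $V(x(t),r(\cdot|t))$ plus perturbation contributions. The nominal bookkeeping (stage cost at $t$ cancels one summand, and the telescoped terminal decrease~\eqref{eq:term_dist_dec} absorbs the new terminal cost) proceeds exactly as in the proof of Theorem~\ref{thm:MPC}, leaving a remainder that is a sum of $\|z(k|t+1) - x^*(k+1|t)\|_Q^2$ and $\|v(k|t+1) - u^*(k+1|t)\|_R^2$. Each summand is dominated by a multiple of $\rho^{2k} c_{\delta,u} \hat{w}^2/c_{\delta,l}$ (the input deviation is controlled via the Lipschitz-like bound~\eqref{eq:lipschitz} applied to $\kappa$), so geometric summability produces
\[
V(x(t+1), r(\cdot|t+1)) \leq V(x(t), r(\cdot|t)) - \ell(x(t), u^*(0|t), r(t)) + c_w \hat{w}^2
\]
for some $c_w$ independent of $N$. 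Combined with $\|x(t) - x_r(t)\|_Q^2 \leq V(x(t), r(\cdot|t)) \leq c_v \|x(t) - x_r(t)\|_Q^2$ inherited from Assumption~\ref{ass:term_dist}, standard perturbed-Lyapunov arguments yield uniform practical exponential stability of $e_r = 0$.

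The hardest part will be making sure the remainder $c_w \hat{w}^2$ is genuinely horizon-independent, since a naive bound accumulates linearly in $N$; this rests crucially on $\rho < 1$ at every step and on~\eqref{eq:term_dist_RPI} to close the terminal transition without an $N$-dependent blow-up. A secondary subtlety is the choice of the ``phantom'' input $u^*(N|t) = k_f(x^*(N|t), r(N|t))$, which is exactly what allows the contraction in Assumption~\ref{ass:contract} to extend cleanly through the final candidate step and feeds into both the feasibility of the terminal constraint and the geometric decay of the stage-cost perturbations.
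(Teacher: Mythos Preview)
Your proposal is correct and follows essentially the same route the paper defers to: the tube-based argument of~\cite{kohler2018novel} for recursive feasibility and practical stability, with the terminal constraint handled via the robust invariance~\eqref{eq:term_dist_RPI} of Assumption~\ref{ass:term_dist} (as the paper explicitly says, citing also~\cite[Thm.~7]{hertneck2018learning}). One small technical caveat: the ``remainder'' after telescoping is not literally a sum of $\|z-x^*\|_Q^2$ and $\|v-u^*\|_R^2$ alone---there are cross terms of the form $2(z-x^*)^\top Q(x^*-x_r)$, so the perturbation bound is generically $c_1\hat{w}\sqrt{V}+c_2\hat{w}^2$ (or $c_w\hat{w}$ after using compactness of $\mathcal{Z}$), not purely $c_w\hat{w}^2$; either form still yields uniform practical exponential stability by the standard ISS-Lyapunov argument.
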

\begin{proof}
The proof is analogous to~\cite{kohler2018novel}, except for the satisfaction of the terminal constraint, which is guaranteed by Assumption~\ref{ass:term_dist}, compare also~\cite[Thm.~7]{hertneck2018learning}. 
\end{proof}
Note that both the size of the constraint set~\eqref{eq:w_1} and the local incremental stabilizability~\eqref{eq:w_2} lead to hard bounds on the size of the disturbance $\hat{w}$, that can be considered in this approach.  
This approach can also be extended to utilize a general nonlinear state and input dependent characterization of the disturbance in order to reduce the conservatism, compare~\cite{Robust_TAC_19}.

\subsection{Continuous-time dynamics} 
\label{sec:app_cont}
In the following, we summarize the continuous-time analog of the reference generic offline computations in Section~\ref{sec:loc_stab}. 
The nonlinear continuous-time dynamics are given by 
\begin{align*}
\dfrac{d}{dt}[x]=\dot{x}=f(x,u)
\end{align*}
and $f$ is assumed to be twice continuously differentiable. 
The following condition characterizes the admissible reference trajectories as the continuous-time analog of Assumption~\ref{ass:ref}. 
\begin{assumption}
\label{ass:ref_cont}
The reference signal $r:\mathbb{R}\rightarrow\mathbb{R}^{n+m}$ is continuously differentiable and satisfies
\begin{align*}
r(t)\in&\mathcal{Z}_r\subseteq\text{int}(\mathcal{Z}),\\
\dot{r}(t)\in&\mathcal{R}(r(t))=\{(\dot{x}_r,\dot{u}_r)|~\dot{x}_r=f(x_r,u_r),~\|\dot{u}_r\|_{\infty}\leq u_{\max}\},
\end{align*}
for all $t\geq 0$ with some constant $u_{\max}$.  
\end{assumption}
\begin{remark}
This assumption can be generalized to consider non-differentiable reference signal $r$ ($\dot{u}_r$ unbounded). 
In this case, the terminal cost $P_f$ should be parameterized with parameters $\theta_i$ independent of $u_r$, i.e., $P_f(x_r)$. 
\end{remark}

The following assumption characterizes the terminal ingredients, as a continuous-time analog of Assumption~\ref{ass:term}.  
\begin{assumption}
\label{ass:term_cont} 
There exist matrices $K_f(r)\in\mathbb{R}^{m\times n}$, $P_f(r)\in\mathbb{R}^{n\times n}$ with $c_l I_n\leq P_f(r)\leq c_u I_n$, $P_f$ continuously differentiable, a terminal set $\mathcal{X}_f(r)=\{x\in\mathbb{R}^n|~V_f(x,r)\leq \alpha\}$ with the terminal cost $V_f(x,r)=\|x-x_r\|_{P_f(r)}^2$, such that the following properties hold for any $r\in\mathcal{Z}_r$, any $x\in\mathcal{X}_f(r)$ and any $\dot{r}\in\mathcal{R}(r)$
\begin{align}
\label{eq:term_dec_cont}
\dfrac{d}{dt}[V_f(x,r)]\leq& -\ell(x,k_f(x,r),r),\\
\label{eq:term_con_cont}
(x,k_f(x,r))\in&\mathcal{Z},
\end{align}
with positive constants $c_l,~c_u,~\alpha$ and
\begin{align*}
\dot{x}=&f(x,k_f(x,r)), \quad k_f(x,r)=u_r+K_f(r)\cdot (x-x_r),\\
&\dfrac{d}{dt}V_f(x,r)\\
=&2(x-x_r)^\top P_f(r)(\dot{x}-\dot{x}_r)+ \|x-x_r\|_{\frac{d}{dt}{P}_f(r)}^2. 
\end{align*}
\end{assumption}
The following Lemma provides sufficient conditions for Assumption~\ref{ass:term_cont} to be satisfied based on the linearization, as a continuous-time version of Lemma~\ref{lemma:lpv}. 
\begin{lemma}
\label{lemma:lpv_cont}
Assume that there exist matrices $K_f(r)\in\mathbb{R}^{m\times n}$ continuous in $r$ and a positive definite matrix $P_f(r)\in\mathbb{R}^{n\times n}$  continuously differentiable with respect to $r$, such that for any  $r\in\mathcal{Z}_r$, $\dot{r}\in\mathcal{R}(r)$, the following matrix inequality is satisfied
\begin{align}
\label{eq:lpv_cont}
&(A(r)+B(r)K_f(r))^\top P_f(r)+ P_f(r)(A(r)+B(r)K_f(r))\nonumber\\
&+\sum_{j=1}^{n+m}\dfrac{\partial P_f}{\partial r_j}\dot{r}_j +(Q+\epsilon I_n+K_f(r)^\top R K_f(r))\leq 0
\end{align}
 with some positive constant $\epsilon$. 
Then there exists a sufficiently small constant $\alpha$, such that $P_f,~K_f$ satisfy Assumption~\ref{ass:term_cont}.  
\end{lemma}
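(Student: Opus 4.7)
The plan is to mimic the proof of Lemma~\ref{lemma:lpv} (the discrete-time version) but adapted to continuous time, where the one-step decrease is replaced by the time derivative of $V_f$ along trajectories and the key LMI~\eqref{eq:lpv_cont} already absorbs the contribution from the parameter drift through the term $\sum_j(\partial P_f/\partial r_j)\dot r_j$. As before I would split the argument into a decrease part and a constraint-satisfaction part, and in the end pick $\alpha=\min\{\alpha_1,\alpha_2\}$.

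For the decrease condition~\eqref{eq:term_dec_cont}, I first differentiate $V_f(x,r)=\|x-x_r\|_{P_f(r)}^2$ along the closed-loop flow $\dot x=f(x,k_f(x,r))$, $\dot r\in\mathcal{R}(r)$. Writing $\Delta x:=x-x_r$, $\Delta u:=K_f(r)\Delta x$ and using a first-order Taylor expansion around $r$ gives
\[
\dot x-\dot x_r=(A(r)+B(r)K_f(r))\Delta x+\Phi_r(\Delta x),
\]
with remainder $\Phi_r$. Substituting into
\[
\tfrac{d}{dt}V_f=2\Delta x^\top P_f(r)(\dot x-\dot x_r)+\Delta x^\top\!\!\sum_j\tfrac{\partial P_f}{\partial r_j}\dot r_j\,\Delta x
\]
and applying~\eqref{eq:lpv_cont} yields
\[
\tfrac{d}{dt}V_f\le -\ell(x,k_f(x,r),r)-\epsilon\|\Delta x\|^2+2\Delta x^\top P_f(r)\Phi_r(\Delta x).
\]
Thus it suffices to dominate the remainder cross term by $\epsilon\|\Delta x\|^2$.

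This is the routine bounding step. Using continuity of $P_f,K_f$ on the compact set $\mathcal{Z}_r$, I define $c_u,c_l,k_u$ exactly as in~\eqref{eq:c_u}--\eqref{eq:k_u}. Twice continuous differentiability of $f$ and compactness of $\mathcal{Z}$ give a quadratic bound $\|\Phi_r(\Delta x)\|\le T(1+k_u^2)\|\Delta x\|^2$ on the terminal set. Combining with Cauchy--Schwarz yields
\[
2\Delta x^\top P_f(r)\Phi_r(\Delta x)\le 2c_u T(1+k_u^2)\|\Delta x\|\cdot\|\Delta x\|^2,
\]
and using the terminal constraint $\|\Delta x\|\le\sqrt{\alpha/c_l}$ this is at most $\epsilon\|\Delta x\|^2$ provided
\[
\alpha\le\alpha_1:=c_l\!\left(\tfrac{\epsilon}{2c_u T(1+k_u^2)}\right)^{\!2}.
\]
Hence~\eqref{eq:term_dec_cont} follows. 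Constraint satisfaction~\eqref{eq:term_con_cont} is identical to Part~II of Lemma~\ref{lemma:lpv}: since $\mathcal{Z}_r\subseteq\text{int}(\mathcal{Z})$ and the ellipsoid $\{V_f(x,r)\le\alpha\}$ together with the linear feedback law embeds $(\Delta x,\Delta u)$ in a ball $\mathcal{B}(\alpha)$ shrinking with $\alpha$, there exists $\alpha_2>0$ such that $\mathcal{Z}_r\oplus\mathcal{B}(\alpha)\subseteq\mathcal{Z}$ for all $\alpha\le\alpha_2$. Taking $\alpha\le\min\{\alpha_1,\alpha_2\}$ verifies Assumption~\ref{ass:term_cont}.

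The only non-trivial aspect compared to the discrete-time proof is the appearance of the parameter-drift term $\sum_j(\partial P_f/\partial r_j)\dot r_j$ in $\frac{d}{dt}V_f$, but this is exactly what~\eqref{eq:lpv_cont} is designed to handle: since $\dot r\in\mathcal{R}(r)$ (Assumption~\ref{ass:ref_cont}) enters linearly, requiring the inequality pointwise in $(r,\dot r)$ makes the drift term bookkeeping automatic. After that, the only obstacle is ensuring the Taylor remainder stays dominated, which is handled by the same two-step argument (quadratic remainder bound on $\mathcal{Z}$ plus a smallness-of-$\alpha$ argument) used in the discrete case; there is no higher-order cross term of the form $\|\Phi_r\|^2\|P_f\|$ to worry about because the continuous-time Lyapunov derivative is linear in $\Phi_r$.
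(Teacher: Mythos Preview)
Your proof is correct and follows the same approach as the paper: Taylor-expand the closed-loop dynamics, substitute into the expression for $\tfrac{d}{dt}V_f$, apply~\eqref{eq:lpv_cont} to obtain the bound $-\ell-\epsilon\|\Delta x\|^2+2\Delta x^\top P_f\Phi_r$, and then argue that the remainder term is dominated for $\alpha$ small enough, with constraint satisfaction handled exactly as in Lemma~\ref{lemma:lpv}. You are in fact more explicit than the paper (which simply says ``for $\alpha$ sufficiently small \dots\ due to the arbitrarily small local Lipschitz bound on $\Phi_r$''), and your observation that the continuous-time derivative is linear in $\Phi_r$---hence no $\|\Phi_r\|^2$ cross term---is exactly the simplification that makes the remainder bound cleaner than in the discrete case.
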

\begin{proof}
Denote $\Delta x=x-x_r$, $\Delta u=K_f(r)\Delta x$. 
Using a first order Taylor approximation at $r=(x_r,u_r)$, we get
\begin{align*}
f(x,u)={f(x_r,u_r)}+A(r)\Delta x+B(r)\Delta u+\Phi_r(\Delta x),
\end{align*}
with the remainder term $\Phi_r$. 
The terminal cost satisfies
\begin{align*}
&\dfrac{d}{dt} V_f(x,r)=2(x-x_r)^\top P_f(r)(\dot{x}-\dot{x}_r)\\
&+(x-x_r)^\top \left[\sum_{j=1}^{n+m} \dfrac{\partial P_f}{\partial r_j} \dot{r_j}\right](x-x_r)\\
\stackrel{\eqref{eq:lpv_cont}}{\leq} &-\ell(x,k_f(x,r))-\epsilon\|\Delta x\|^2+2(x-x_r)^\top P_f(r) \Phi_r(\Delta x).
\end{align*}
For $\alpha$ sufficiently small, this implies~\eqref{eq:term_dec_cont} (due to the arbitrarily small local Lipschitz bound on the higher order terms $\Phi_r$). 
Constraint satisfaction~\eqref{eq:term_con_cont} is guaranteed analogous to Lemma~\ref{lemma:lpv}. 
\end{proof}
The following Lemma provides corresponding LMI conditions, similar to Lemma~\ref{lemma:LMI}. 
\begin{lemma}
\label{lemma:LMI_cont}
Suppose that there exists a matrix $Y(r)$  continuous in $r$  and $X(r)$ continuously differentiable with respect to $r$, that satisfy the constraints in~\eqref{eq:LMI_cont} for all $r\in\mathcal{Z}_r,~\dot{r}\in\mathcal{R}(r)$. 
Then  $P_f=X^{-1}$, $K_f=YP_f$ satisfy~\eqref{eq:lpv_cont}.
\end{lemma}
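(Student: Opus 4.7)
The plan is to mimic the proof of Lemma~\ref{lemma:LMI} in the discrete-time case, with the additional ingredient of differentiating the relation $P_f(r) = X(r)^{-1}$ along the trajectory. First, I would undo the Schur complement in~\eqref{eq:LMI_cont}. The displayed LMI is expected to have the block structure
\begin{equation*}
\begin{pmatrix}
M(r,\dot r) & X(r)(Q+\epsilon I)^{1/2} & Y(r)^\top R^{1/2}\\
* & -I & 0 \\
* & * & -I
\end{pmatrix}\le 0,
\end{equation*}
with $M(r,\dot r) := A(r)X(r)+X(r)A(r)^\top + B(r)Y(r) + Y(r)^\top B(r)^\top - \sum_{j=1}^{n+m}\tfrac{\partial X}{\partial r_j}\dot r_j$. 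Applying the Schur complement to the two trailing $-I$ blocks collapses the LMI to the single matrix inequality
\begin{equation*}
M(r,\dot r) + X(r)(Q+\epsilon I)X(r) + Y(r)^\top R\, Y(r) \le 0.
\end{equation*}

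Next, I would pre- and post-multiply this inequality by $P_f(r) = X(r)^{-1}$. The key identity I need is obtained by differentiating $P_f(r)X(r)=I$ componentwise in $r_j$, namely $\tfrac{\partial P_f}{\partial r_j} = -P_f(r)\tfrac{\partial X}{\partial r_j}P_f(r)$. Using this with the chain rule along $\dot r$, the central block transforms according to
\begin{equation*}
-P_f(r)\sum_{j}\tfrac{\partial X}{\partial r_j}\dot r_j \,P_f(r) = \sum_{j}\tfrac{\partial P_f}{\partial r_j}\dot r_j.
\end{equation*}
Substituting $Y(r)=K_f(r)X(r)$ (so that $K_f(r) = Y(r)P_f(r)$), the remaining products congruence-transform in the standard way: $P_f(AX+BY)P_f = P_f(A+BK_f)$ on the right factor and its transpose on the left, $P_f X(Q+\epsilon I)X P_f = Q+\epsilon I$, and $P_f Y^\top R Y P_f = K_f^\top R K_f$. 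Collecting the terms recovers~\eqref{eq:lpv_cont}.

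The calculation is essentially a transcription of the discrete-time congruence argument, so I do not expect any real obstacle. The only thing that needs genuine care is the continuous-time derivative identity for $X^{-1}$, in particular the sign on $\tfrac{\partial X}{\partial r_j}$ versus $\tfrac{\partial P_f}{\partial r_j}$, and checking that the continuous differentiability of $X(r)$ assumed in the lemma indeed transfers to $P_f(r)$ via the inverse function theorem (this is immediate since $X(r)$ is positive definite on the compact reference set). Once that is in place, the rest is mechanical.
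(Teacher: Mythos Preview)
Your proposal is correct and follows essentially the same route as the paper: Schur complement on the trailing $-I$ blocks, a congruence transformation with $X(r)$ (respectively $P_f(r)=X(r)^{-1}$), and the derivative identity $X\,\tfrac{d}{dt}[X^{-1}]\,X=-\tfrac{d}{dt}X$ to handle the $\dot P_f$ term. The only cosmetic difference is the direction of the argument --- you go from the LMI to~\eqref{eq:lpv_cont}, whereas the paper starts at~\eqref{eq:lpv_cont} and arrives at the LMI --- but since each step is an equivalence this is immaterial.
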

\begin{proof}
Multiplying~\eqref{eq:lpv_cont} from left and right with $X(r)$ yields
\begin{align*}
&(A(r)X(r)+B(r)Y(r))^\top +(A(r)X(r)+B(r)Y(r)) \\
&+X(r)\dfrac{d}{dt}[X^{-1}(r)]X(r)+X(r)(Q+\epsilon I_n)X(r)\\
&+Y(r)^\top R Y(r)\leq 0.
\end{align*}
Note that the chain rule applied to the inverse of $X$ yields 
\begin{align*}
X(r)\dfrac{d}{dt}\left[X^{-1}(r)\right]X(r)=-\dfrac{d}{dt}[X(r)]=-\sum_{i=1}^p X_i\dfrac{\partial \theta_i}{\partial r} \dot{r} .
\end{align*}
Applying the Schur complement results in~\eqref{eq:LMI_cont}. 
\end{proof}
If a gridding approach is considered to compute the terminal ingredients, one needs to grid $r\in\mathcal{Z}_r$ and consider the $2 ^m$ vertices of $\dot{u}_r$  (since~\eqref{eq:LMI_cont} is affine in $\dot{u}_r$).

For the convex approach, polytopic sets need to be constructed such that 
$(\theta,\dot{\theta})\in\Theta\times\Omega=\overline{\Theta},~\forall r\in\mathcal{Z}_r,~\dot{r}\in\mathcal{R}(r)$.
The following proposition provides the corresponding LMI conditions based on the vertices of $\overline{\Theta}$, similar to Proposition~\ref{prop:LMI_lpv}. 
\begin{proposition}
\label{prop:LMI_lpv_cont}
Suppose that there exists matrices $X_i,~Y_i,~\Lambda_i,~X_{\min}$ that satisfy the constraints in~\eqref{eq:LMI_LPV_cont}. 
Then the following matrices satisfy~\eqref{eq:lpv_cont}: 
\begin{align*}
P_f(r)=&X^{-1}(r),\quad K_f(r)=Y(r)P_f(r).
\end{align*} 
\end{proposition}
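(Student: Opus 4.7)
The plan is to mirror the proof of Proposition~\ref{prop:LMI_lpv} almost step-for-step, using Lemma~\ref{lemma:LMI_cont} in place of Lemma~\ref{lemma:LMI}. First I would apply Lemma~\ref{lemma:LMI_cont}: it reduces the claim to showing that $X(r),Y(r)$ (as defined by~\eqref{eq:X}) satisfy the LMI~\eqref{eq:LMI_cont} for every $r\in\mathcal{Z}_r$ and $\dot{r}\in\mathcal{R}(r)$. Using the chain rule together with~\eqref{eq:X}, the derivative term in~\eqref{eq:LMI_cont} becomes $\dot{X}(r)=\sum_{i=1}^p X_i\dot{\theta}_i$, so the condition can be rewritten as a matrix inequality depending affinely on $\dot{\theta}$ and polynomially on $\theta$, parameterized by $(\theta,\dot{\theta})$. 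By construction of $\overline{\Theta}=\Theta\times\Omega$ and the inclusion $(\theta(r),\dot\theta(r))\in\overline{\Theta}$, verifying the LMI on all of $\overline{\Theta}$ is then sufficient.

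Next, as in Proposition~\ref{prop:LMI_lpv}, the multipliers $\Lambda_i\geq 0$ in~\eqref{eq:LMI_LPV_cont} give a relaxation: since $\theta_i^2\geq 0$, any $(\theta,\dot\theta)$ for which the full LMI (with the $-\mathrm{diag}(\sum\theta_i^2\Lambda_i,0)$ correction) holds also satisfies the uncorrected LMI. So it suffices to show that the relaxed LMI, verified at the vertices of $\overline{\Theta}$, actually propagates to every interior point of $\overline{\Theta}$.  This is where multi-convexity enters: by~\cite[Corollary 3.2]{apkarian2000parameterized}, a matrix-valued function that is concave along each edge direction of a polytopic box attains its minimum at a vertex. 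In the continuous-time setting the edge directions of $\overline{\Theta}$ are the coordinate directions $\theta_i$ and $\dot\theta_i$; along the $\dot{\theta}_i$ directions the LMI is already affine (through $\sum X_i\dot\theta_i$), so multi-concavity there is automatic. Along the $\theta_i$ directions the LMI contains quadratic terms arising from products like $A_iX_j\theta_i\theta_j$ and from the Schur block $(Q+\epsilon)^{1/2}X(\theta)$, and these are precisely what the multi-concavity constraints (the continuous-time analog of~\eqref{eq:LMI_LPV2}, built into~\eqref{eq:LMI_LPV_cont}) dominate via the additional $-\theta_i^2\Lambda_i$ term, in the spirit of~\cite[Corollary 3.4--3.5]{apkarian2000parameterized}.

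Putting these pieces together, the relaxed LMI is multi-concave on $\overline{\Theta}$, hence is implied on all of $\overline{\Theta}$ by its satisfaction at the vertices, which is exactly~\eqref{eq:LMI_LPV_cont}. Combined with the first step, this yields~\eqref{eq:LMI_cont} for all admissible $(r,\dot r)$, and Lemma~\ref{lemma:LMI_cont} concludes that $P_f=X^{-1}$, $K_f=YP_f$ satisfy~\eqref{eq:lpv_cont}. I expect the main obstacle, and the only place where the continuous-time argument truly departs from the discrete-time one, to be the careful bookkeeping of which coordinates require a multi-concavity multiplier: the pair $(\theta,\theta^+)$ entered the discrete-time LMI symmetrically through $X(\theta)$ and $X(\theta^+)$, whereas $(\theta,\dot\theta)$ enter asymmetrically ($\theta$ quadratically, $\dot\theta$ linearly). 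Verifying that the $\dot\theta$-directions introduce no additional multi-concavity burden, and hence that the multiplier structure in~\eqref{eq:LMI_LPV_cont} with only $\Lambda_i$ indexed by $\theta_i$ is sufficient, is the subtle point; after that, the proof is essentially a repetition of Proposition~\ref{prop:LMI_lpv}.
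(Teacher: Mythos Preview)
Your proposal is correct and follows exactly the approach the paper takes: its proof is the single sentence ``analogous to Proposition~\ref{prop:LMI_lpv}, based on multi-convexity and Lemma~\ref{lemma:LMI_cont}'', and you have simply unpacked those two ingredients, including the key observation that $\dot\theta$ enters affinely so only the $\theta$-directions require multipliers. One small inaccuracy worth noting: in~\eqref{eq:LMI_LPV_cont} the off-diagonal Schur blocks $(Q+\epsilon)^{1/2}X(\theta)$ and $R^{1/2}Y(\theta)$ are \emph{linear} in $\theta$, so the only quadratic-in-$\theta$ contribution sits in the top-left block through $A(\theta)X(\theta)+B(\theta)Y(\theta)$ and its transpose---this is precisely why the multi-concavity constraint~\eqref{eq:LMI_LPV2_cont} involves only $A_iX_i+B_iY_i$ and why $\Lambda_i$ is $n\times n$.
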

\begin{proof}
The proof is analogous to Proposition~\ref{prop:LMI_lpv}, based on multi-convexity and Lemma~\ref{lemma:LMI_cont}.
\end{proof}

\begin{remark}
\label{rk:sample_hold}
The continuous-time formulation is suitable if the online MPC optimization considers continuous-time input signals, instead of piece-wise constant inputs (as is common in many numerical implementations).  
Nevertheless, if the sampling time $h$ is sufficiently small, the continuous-time terminal cost  (scaled by $1/h$) might satisfy the discrete-time conditions (Ass.~\ref{ass:term}) with a piece-wise constant input. 
This can be favorable since the corresponding offline optimization problem~\eqref{eq:LMI_cont} or \eqref{eq:LMI_LPV_cont} is often easier formulated and faster solved, especially if a non-trivial discretization is considered.  
Algorithm~\ref{alg:offline_alpha} can be used to ensure the validity of the computed terminal ingredients with the zero-order hold input (instead of the continuous-time feedback). 
\end{remark}

\subsection{Output tracking stage cost}
\label{sec:app_output}
In the following, we discuss how the derivation in Section~\ref{sec:loc_stab} can be extended to deal with an output tracking stage cost. 
As an alternative to~\eqref{eq:stage}, consider the following output reference tracking stage cost
\begin{align}
\label{eq:stage_output}
\ell(x,u,r)=\|h(x,u)-h(x_r,u_r)\|_{S(r)}^2,
\end{align}
with a nonlinear twice continuously differentiable output function $h:\mathcal{Z}\rightarrow \mathbb{R}^p$ and a positive definite weighting matrix $S(r)$, which assumed to be continuous in $r$. 
Such a stage cost can be used for output regulation, output trajectory tracking, output path following or manifold stabilization, compare~\cite{faulwasser2012optimization}. 
We denote the Jacobian of the output $h$ around an arbitrary point $r\in\mathcal{Z}_r$ by
\begin{align}
\label{eq:C_r}
C(r)=\left.\left[\dfrac{\partial h}{\partial x}\right]\right|_{(x,u)=r},\quad D(r)=\left.\left[\dfrac{\partial h}{\partial u}\right]\right|_{(x,u)=r}. 
\end{align}
The following lemma establishes sufficient conditions for Assumption~\ref{ass:term} with the stage cost~\eqref{eq:stage_output} based on the linearization, similar to Lemma~\ref{lemma:lpv}. 
\begin{lemma}
\label{lemma:lpv_output}
Suppose that $f,~h$ are twice continuously differentiable.
Assume that there exists a matrix $K_f(r)\in\mathbb{R}^{m\times n}$ and a positive definite matrix $P_f(r)\in\mathbb{R}^{n\times n}$ continuous in $r$, such that for any  $r\in\mathcal{Z}_r$, $r^+\in\mathcal{R}(r)$, the following matrix inequality is satisfied
\begin{align}
\label{eq:lpv_output}
&(A(r)+B(r)K_f(r))^\top P_f(r^+)(A(r)+B(r)K_f(r))- P_f(r)\\
\leq&-(C(r)+D(r)K_f(r))^\top S(r) (C(r)+D(r)K_f(r))-\tilde{\epsilon} I_n\nonumber
\end{align}
 with some positive constant $\tilde{\epsilon}$. 
Then there exists a sufficiently small constant $\alpha$, such that $P_f,~K_f$ satisfy Assumption~\ref{ass:term}.  
\end{lemma}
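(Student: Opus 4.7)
The plan is to mirror the two-part structure of the proof of Lemma~\ref{lemma:lpv}, with the key modification being that the stage cost now depends on the output $h$ rather than directly on $(x,u)$, so a Taylor expansion of \emph{both} $f$ and $h$ around $r$ will be needed.

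First I would introduce $\Delta x := x - x_r$, $\Delta u := K_f(r)\Delta x$, and write the first-order Taylor expansions
\begin{align*}
f(x,u) &= f(x_r,u_r) + A(r)\Delta x + B(r)\Delta u + \Phi_r(\Delta x),\\
h(x,u) &= h(x_r,u_r) + C(r)\Delta x + D(r)\Delta u + \Psi_r(\Delta x),
\end{align*}
with remainder terms $\Phi_r,\Psi_r$. Twice continuous differentiability of $f$ and $h$ together with compactness of $\mathcal{Z}$ yield quadratic bounds on $\|\Phi_r(\Delta x)\|$ and $\|\Psi_r(\Delta x)\|$, hence both remainders are locally Lipschitz with constants $L_{\Phi,\alpha},L_{\Psi,\alpha}$ that can be made arbitrarily small by reducing $\alpha$ (via $\|\Delta x\|^2 \leq \alpha/c_l$, as in~\eqref{eq:alpha_1}).

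Next, I would bound $V_f(x^+,r^+)$ by expanding $\|(A(r)+B(r)K_f(r))\Delta x + \Phi_r(\Delta x)\|_{P_f(r^+)}^2$ using the triangle/Cauchy--Schwarz inequality (as in~\eqref{eq:lpv_1}), and bound $\ell(x,k_f(x,r),r) = \|(C(r)+D(r)K_f(r))\Delta x + \Psi_r(\Delta x)\|_{S(r)}^2$ in the analogous way. Substituting the LMI~\eqref{eq:lpv_output} (which simultaneously controls the closed-loop $A+BK_f$ term in the $P_f(r^+)$ norm and produces the output quadratic form $\|(C+DK_f)\Delta x\|_{S(r)}^2$ on the right-hand side, minus $\tilde\epsilon\|\Delta x\|^2$) gives
\begin{align*}
V_f(x^+,r^+) + \ell(x,k_f(x,r),r) \leq V_f(x,r) - \tilde\epsilon\|\Delta x\|^2 + \mathcal{E}(\Delta x),
\end{align*}
where $\mathcal{E}(\Delta x)$ collects all cross terms and squared remainders. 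Using continuity of $P_f,K_f,S$ on the compact set $\mathcal{Z}_r$ to obtain uniform bounds $c_u, k_u, s_u := \max_r\lambda_{\max}(S(r))$, the error $\mathcal{E}(\Delta x)$ can be bounded by $(a_1 L_{\Phi,\alpha} + a_2 L_{\Psi,\alpha} + a_3 L_{\Phi,\alpha}^2 + a_4 L_{\Psi,\alpha}^2)\|\Delta x\|^2$ for constants $a_i$ independent of $\alpha$. By choosing $\alpha_1$ small enough so that this quantity is at most $\tilde\epsilon$, the decrease condition~\eqref{eq:term_dec} follows.

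The main obstacle, compared with Lemma~\ref{lemma:lpv}, is the bookkeeping of two simultaneous Taylor remainders and their cross terms against the single slack $\tilde\epsilon\|\Delta x\|^2$ provided by~\eqref{eq:lpv_output}; the explicit Lipschitz threshold $L_\Phi^*$ used in~\eqref{eq:lpv_2} must now be replaced by a joint condition on $(L_{\Phi,\alpha},L_{\Psi,\alpha})$, but since both remainders vanish quadratically in $\|\Delta x\|$, this can be achieved by the same shrinking-$\alpha$ argument. Finally, constraint satisfaction~\eqref{eq:term_con} is proven verbatim as in Part~II of Lemma~\ref{lemma:lpv}: the terminal constraint $\|\Delta x\|_{P_f(r)}^2 \leq \alpha$ together with the uniform bounds $c_l, k_u$ confines $(\Delta x,\Delta u)$ to a ball $\mathcal{B}(\alpha)$, and since $\mathcal{Z}_r \subseteq \mathrm{int}(\mathcal{Z})$ is compact there exists $\alpha_2 > 0$ with $\mathcal{Z}_r \oplus \mathcal{B}(\alpha_2) \subseteq \mathcal{Z}$. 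Taking $\alpha = \min\{\alpha_1,\alpha_2\}$ completes the argument.
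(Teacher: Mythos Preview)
Your proposal is correct and follows essentially the same strategy as the paper: Taylor-expand both $f$ and $h$, apply the matrix inequality~\eqref{eq:lpv_output} to the leading quadratic terms, and absorb the two remainders into the slack $\tilde\epsilon\|\Delta x\|^2$ by shrinking $\alpha$. The paper organizes the bookkeeping slightly differently---it first derives a lower bound $\ell(x,k_f(x,r),r)\geq \|(C+DK_f)\Delta x\|_{S(r)}^2-(\tilde\epsilon/2)\|\Delta x\|^2$ via an explicit Lipschitz threshold on the output remainder and then reduces verbatim to Lemma~\ref{lemma:lpv} with $\epsilon=\tilde\epsilon/2$---but the content is the same as your joint treatment of the two remainders.
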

\begin{proof}
A first order Taylor approximation at $r=(x_r,u_r)$ yields
\begin{align*}
&h(x,k_f(x,r))-{h(x_r,u_r)}\\
=&(C(r)+D(r)K_f(r))\Delta x+\tilde{\Phi}_{r}(\Delta x),
\end{align*}
with the remainder term $\tilde{\Phi}_{r}$ and $\Delta x=x-x_r$.  
The stage cost satisfies
\begin{align}
\label{eq:output_1}
&\ell(x,k_f(x,r),r) 
\\ 
\geq &\|(C(r)+D(r)K_f(r))\Delta x\|_{S(r)}^2+\|\tilde{\Phi}_{r}(\Delta x)\|_{S(r)}^2\nonumber\\
&- 2\|\tilde{\Phi}_{r}(\Delta x)\|_{S(r)} \|(C(r)+D(r)K_f(r))\Delta x \|_{S(r)}.\nonumber
\end{align}
Given continuity and compactness, there exists a constant 
\begin{align}
\label{eq:output_2}
c_y=\max_{r\in\mathcal{Z}_r}\|(C(r)+D(r)K_f(r))\|_{S(r)}.
\end{align}
For a sufficiently small $\alpha$, the remainder term $\tilde{\Phi}_{r}$  satisfies the following (local) Lipschitz bound
\begin{align}
\label{eq:output_3}
\|\tilde{\Phi}_{r}(\Delta x)\|_{S(r)}/\|\Delta x\|=:\tilde{L}_{r,x}\leq \tilde{L}^*:=c_y-\sqrt{c_y^2-\tilde{\epsilon}/2},
\end{align}
for all $x\in\mathcal{X}_f(r)$ and all $r\in\mathcal{Z}_r$. 
This implies
\begin{align*}
&\ell(x,k_f(x,r))\\
\stackrel{\eqref{eq:output_1},\eqref{eq:output_3}}{\geq}& \|(C(r)+D(r)K_f(r))\Delta x\|_{S(r)}^2+\tilde{L}_{r,x}^2\|\Delta x\|^2\\
&-2\tilde{L}_{r,x}\|\Delta x\|^2\|(Cr)+D(r)K_f(r)\|_{S(r)}\\
\stackrel{\eqref{eq:output_2}}{\geq}& \|(C(r)+D(r)K_f(r))\Delta x\|_{S(r)}^2+\tilde{L}_{r,x}(\tilde{L}_{r,x}-2c_y)\|\Delta x\|^2\\
\stackrel{\eqref{eq:output_3}}{\geq}& \|(C(r)+D(r)K_f(r))\Delta x\|_{S(r)}^2+\tilde{L}^*(\tilde{L}^*-2c_y)\|\Delta x\|^2\\
\stackrel{\eqref{eq:output_3}}=& \|(C(r)+D(r)K_f(r))\Delta x\|_{S(r)}^2-\tilde{\epsilon}/2\|\Delta x\|^2.
\end{align*}
The second to last step follows by using the fact that the function $L(L-2c_y)$ attains it minimum for $L\in[0,L^*]$ at $L=L^*$. 
Combining the derived bound on $\ell(x,k_f(x,r))$ with~\eqref{eq:lpv_output} ensures that the terminal cost $V_f$ satisfies inequality~\eqref{eq:lpv_1} in Lemma~\ref{lemma:lpv} with the modified stage cost and with ${\epsilon}=\tilde{\epsilon}/2$. 
The remainder of the proof is analogous to Lemma~\ref{lemma:lpv}. 
\end{proof}
\begin{remark}
\label{rk:output_1}
For the linear output $h(x,u)=[Q^{1/2}x;R^{1/2}u]\in\mathbb{R}^{n+m}$ and $S=I$ we recover the conditions in Lemma~\ref{lemma:lpv} with the stage cost~\eqref{eq:stage}.  
\end{remark}
\begin{remark}
\label{rk:output_2}
Depending on the output $h$ and the reference $r$, there may exist multiple solutions that achieve exact output tracking. 
Thus, we can in general not expect asymptotic/exponential stability of the reference $r$, but instead stability of a corresponding set or manifold, compare~\cite{faulwasser2012optimization}. 
Under suitable (incremental) detectability conditions on the output $h$, we can recover stability of the specific reference trajectory $r$. 
\end{remark}
Based on these conditions, Lemma~\ref{lemma:LMI_output} provides LMI conditions to compute $P_f,~K_f$, similar to Lemma~\ref{lemma:LMI}. 
Furthermore, if the parameters $\theta_i$ are chosen, such that
\begin{align*}
S^{-1}(r)=S_0+\sum_{i=1}^p \theta_i(r) S_i,
\end{align*}
then Proposition~\ref{prop:LMI_lpv_output} yields LMI conditions based on the vertices of $\overline{\Theta}$, similar to Proposition~\ref{prop:LMI_lpv}.

\begin{table*}
\small{
\begin{subequations}
\label{eq:LMI_cont}
\begin{align}
\min_{X(r),Y(r),X_{\min}}& - \log\det X_{\min}\\
\text{s.t. }&\begin{pmatrix}
A(r)X(r)+B(r)Y(r) +(A(r)X(r)+B(r)Y(r))^\top -\dfrac{d}{dt}[X(r)]&((Q+\epsilon)^{1/2}X(r))^\top&(R^{1/2}Y(r))^\top\\
*&-I&0\\
*&0&-I\\
\end{pmatrix}\leq 0,\\
&X_{\min}\leq X(r),\\
&\forall r\in\mathcal{Z}_r,~\dot{r}\in\mathcal{R}(r).  
\end{align}
\end{subequations}
}
\label{tab:long-eq}
\end{table*} 
\begin{table*}
\small{
\begin{subequations}
\label{eq:LMI_LPV_cont}
\begin{align}
\min_{X_i,Y_i,\Lambda_i,X_{\min}}& - \log\det X_{\min}\\
\text{s.t. }&\begin{pmatrix}
A(\theta)X(\theta)+B(\theta)Y(\theta) +(A(\theta)X(\theta)+B(\theta)Y(\theta))^\top -X(\dot{\theta})+X_0&((Q+\epsilon)^{1/2}X(\theta))^\top&(R^{1/2}Y(\theta))^\top\\
*&-I&0\\
*&0&-I\\
\end{pmatrix}\nonumber\\
\leq& -  \begin{pmatrix}\sum_{i=1}^p\theta_i^2\Lambda_i&0\\0&0\end{pmatrix},\\
&X_{\min}\leq X(\theta),\quad \forall (\theta,\dot{\theta})\in\text{Vert}(\overline{\Theta}),  \\
\label{eq:LMI_LPV2_cont}
&\Lambda_i+
(A_iX_i+B_iY_i)
+(A_iX_i+B_iY_i)^\top
\geq 0,\quad \Lambda_i\geq 0,\quad  i=1,\dots,p.
\end{align}
\end{subequations}
}
\end{table*}

\begin{table*}
\begin{lemma}
\label{lemma:LMI_output}
Suppose that there exists matrices $X(r)$,~$Y(r)$ continuous in $r$, that satisfy the following constraints
\small{
\begin{subequations}
\label{eq:LMI_output}
\begin{align}
\min_{X(r),Y(r),X_{\min}}& - \log  \det X_{\min}\\
\text{s.t. }&\begin{pmatrix}
X(r)&(A(r)X(r)+B(r)Y(r))^\top&(C(r)X(r)+D(r)Y(r))^\top&\sqrt{\tilde{\epsilon}} X(r)\\
*&X(r^+)&0&0\\
*&*&S^{-1}(r)&0\\
*&*&*&I
\end{pmatrix}\geq 0,\\
&X_{\min}\leq X(r),\quad 
\forall r\in\mathcal{Z}_r,~r^+\in\mathcal{R}(r).  
\end{align}
\end{subequations}
}
\label{tab:long-eq}
\normalsize
Then  $P_f=X^{-1}$, $K_f=YP_f$ satisfy~\eqref{eq:lpv_output}.
\end{lemma}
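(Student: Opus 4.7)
The plan is to mirror the proof of Lemma~\ref{lemma:LMI}, now accommodating the additional output-related term. The key idea is that~\eqref{eq:lpv_output} and the block LMI~\eqref{eq:LMI_output} are related by the congruence transformation $X(r)=P_f(r)^{-1}$, the substitution $Y(r)=K_f(r)X(r)$, and a three-fold Schur complement.

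First I would multiply~\eqref{eq:lpv_output} from the left and right by $X(r)$ and exploit the identities $(A(r)+B(r)K_f(r))X(r)=A(r)X(r)+B(r)Y(r)$ and $(C(r)+D(r)K_f(r))X(r)=C(r)X(r)+D(r)Y(r)$ to eliminate $K_f$ and $P_f$. This converts~\eqref{eq:lpv_output} into an equivalent quadratic matrix inequality in $(X,Y)$ whose right-hand side is a sum of three positive-semidefinite terms: one weighted by $P_f(r^+)=X(r^+)^{-1}$ (from the predicted terminal cost), one weighted by $S(r)$ (from the output stage cost), and $\tilde{\epsilon}X(r)^2$ (from the robustness margin).

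Next I would recognise these three terms as precisely the contributions produced by Schur-complementing out the lower-right diagonal blocks $X(r^+)$, $S^{-1}(r)$, and $I$ of the $4\times 4$ matrix in~\eqref{eq:LMI_output}, with the first-row off-diagonal entries $(A(r)X(r)+B(r)Y(r))^\top$, $(C(r)X(r)+D(r)Y(r))^\top$, and $\sqrt{\tilde{\epsilon}}X(r)$ supplying the required bilinear forms. The step requiring most care is essentially bookkeeping: ensuring the Schur complement is applied in the correct direction (which uses that $X(r^+)$, $S^{-1}(r)$, and $I$ are all positive definite, guaranteed by hypothesis), and that the $\sqrt{\tilde{\epsilon}}X(r)$ block correctly encodes the $\tilde{\epsilon}I_n$ margin after the congruence. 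Relative to Lemma~\ref{lemma:LMI}, the only genuinely new feature is the output weight $S(r)$, which enters as a block of identical structure to the $Q+\epsilon$ block already present in~\eqref{eq:LMI}, so no essentially new difficulty arises.
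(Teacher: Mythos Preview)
Your proposal is correct and follows essentially the same approach as the paper's proof: congruence by $X(r)=P_f(r)^{-1}$ with $Y(r)=K_f(r)X(r)$, followed by a Schur complement that splits the three quadratic terms (weighted by $X(r^+)^{-1}$, $S(r)$, and $I$) into the block-diagonal of~\eqref{eq:LMI_output}. The paper presents the same steps slightly more tersely, but there is no substantive difference.
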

\begin{proof}
The proof is similar to Lemma~\ref{lemma:lpv}, compare also~\cite{boyd1994linear}. 
Define $X(r)=P_f(r)^{-1}$ and $Y(r)=K_f(r)X(r)$. 
Multiplying~\eqref{eq:lpv_output} from left and right with $X(r)$ yields
\small{
\begin{align*}
&(A(r)X(r)+B(r)Y(r))^\top X(r^+)^{-1} (A(r)X(r)+B(r)Y(r))-X(r)
+\tilde{\epsilon} X(r) IX(r)\\
&+(C(r)X(r)+D(r)Y(r))^\top S(r)(C(r)X(r)+D(r)Y(r))\leq 0.
\end{align*}
}
\normalsize
This can be equivalently written as
\small{
\begin{align*}
X(r)-
\begin{pmatrix}
A(r)X(r)+B(r)Y(r)\\C(r)X(r)+D(r)Y(r)\\\sqrt{\tilde{\epsilon}}X(r)
\end{pmatrix}^{\top}
\begin{pmatrix}
X(r^+)^{-1}&0&0\\
0&S(r)&0\\
0&0&I
\end{pmatrix}
\begin{pmatrix}
A(r)X(r)+B(r)Y(r)\\C(r)X(r)+D(r)Y(r)\\\sqrt{\tilde{\epsilon}}X(r)
\end{pmatrix}
\geq 0
\end{align*}}
\normalsize
Using the Schur complement this reduces to~\eqref{eq:LMI_output}, which is linear in $X,~Y$. 
\end{proof}
\end{table*}

\begin{table*}
\begin{proposition}
\label{prop:LMI_lpv_output}
Suppose that there exists matrices $X_i,~Y_i,~\Lambda_i,~X_{\min}$ that satisfy the following constraints
\small{
\begin{subequations}
\label{eq:LMI_LPV_output}
\begin{align}
\min_{X_i,Y_i,\Lambda_i,X_{\min}}& - \log\det X_{\min}\\
\text{s.t. }&\begin{pmatrix}
X(\theta)&(A(\theta)X(\theta)+B(\theta)Y(\theta))^\top&(C(\theta)X(\theta)+D(\theta)Y(\theta))^\top&\sqrt{\tilde{\epsilon}} X(\theta)\\
*&X(\theta^+)&0&0\\
*&*&S^{-1}(\theta)&0\\
*&*&*&I
\end{pmatrix}\geq   \begin{pmatrix}\sum_{i=1}^p\theta_i^2\Lambda_i&0\\0&0\end{pmatrix},\\
&X_{\min}\leq X(\theta),\quad \forall (\theta,\theta^+)\in\text{Vert}(\overline{\Theta}),  \\
\label{eq:LMI_LPV_output2}
&\begin{pmatrix}
0&(A_iX_i+B_iY_i)^\top&(C_iX_i+D_iY_i)^\top\\
(A_iX_i+B_iY_i)&0&0\\
(C_iX_i+D_iY_i)&0&0
\end{pmatrix}\leq \Lambda_i,\quad \Lambda_i\geq 0,\quad  i=1,\dots,p.
\end{align}
\end{subequations}
}
\normalsize
Then $P_f=X^{-1}$ and $K_f=Y P_f$ satisfy~\eqref{eq:lpv_output}.
\end{proposition}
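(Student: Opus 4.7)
The plan is to mirror the proof of Proposition~\ref{prop:LMI_lpv} step by step, with Lemma~\ref{lemma:LMI_output} replacing Lemma~\ref{lemma:LMI}. By Lemma~\ref{lemma:LMI_output}, it suffices to verify that the parameterized matrices $X(\theta),~Y(\theta)$ (built as in~\eqref{eq:X}) satisfy the LMI in~\eqref{eq:LMI_output} for every admissible pair $(r,r^+)$, which by construction of the hyperbox sets $\Theta,~\Omega$ (i.e.~\eqref{eq:Theta_set} and~\eqref{eq:overline_theta}) is implied by satisfaction of the LMI for every $(\theta,\theta^+)\in\overline{\Theta}$. The presence of the multipliers $\Lambda_i\geq 0$ on the right-hand side of~\eqref{eq:LMI_LPV_output} only strengthens the inequality, so a feasible solution of~\eqref{eq:LMI_LPV_output} automatically gives a feasible solution of the (parameter-dependent) LMI~\eqref{eq:LMI_output}, provided we can reduce the infinite family of constraints to constraints at the vertices of $\overline{\Theta}$.

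The vertex reduction is the content of the multi-convexity argument of~\cite[Corollary~3.2, 3.4, 3.5]{apkarian2000parameterized}: if the LMI expression is multi-concave along each edge of $\overline{\Theta}$, i.e., its second partial derivative with respect to each coordinate $\theta_i$ and $\theta_i^+$ is negative semidefinite, then the minimum over $\overline{\Theta}$ is attained at a vertex. The only coordinate directions that generate quadratic terms are the $\theta_i$ appearing in the products $X(\theta)A(\theta)^\top + Y(\theta)^\top B(\theta)^\top$ and---new compared to Proposition~\ref{prop:LMI_lpv}---also $X(\theta)C(\theta)^\top + Y(\theta)^\top D(\theta)^\top$. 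A direct computation shows that the second partial derivative with respect to $\theta_i$ contributes exactly the off-diagonal block $(A_iX_i+B_iY_i)$ plus its output counterpart $(C_iX_i+D_iY_i)$ appearing symmetrically in the augmented matrix in~\eqref{eq:LMI_LPV_output2}. Adding the PSD slack $\Lambda_i$ (together with the $\sum_i\theta_i^2\Lambda_i$ compensation inserted in the top-left block) makes the expression multi-concave in $\theta_i$, in complete analogy with~\cite[Corollary~3.5]{apkarian2000parameterized}. The remaining directions $\theta_i^+$ enter only through the linear block $X(\theta^+)$ and therefore contribute no curvature, so no further multi-convexity correction is required in those directions.

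The main obstacle, and the reason Proposition~\ref{prop:LMI_lpv_output} is not an immediate restatement, is accounting for the extra quadratic contribution of $(C_iX_i+D_iY_i)$: the multi-concavity condition must absorb second-derivative terms coming from \emph{both} the dynamic block and the output block simultaneously, and through a \emph{common} multiplier $\Lambda_i$. I would verify this by writing the full LMI as a sum $M(\theta,\theta^+) = M_0(\theta,\theta^+) + \sum_i \theta_i^2 M_{ii}$ with $M_0$ affine in $(\theta,\theta^+)$ along edges, and then observing that the curvature block $M_{ii}$ is exactly the negative of the $3\times 3$ block on the left-hand side of~\eqref{eq:LMI_LPV_output2} (after Schur complement in the $S^{-1}(\theta)$ diagonal block, using that $S^{-1}$ contributes only linearly in $\theta$). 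Once this identification is made, \eqref{eq:LMI_LPV_output2} guarantees $-M_{ii} + \Lambda_i \succeq 0$, which, combined with the slack $\sum_i\theta_i^2\Lambda_i$ inserted in the top-left block, yields multi-concavity and concludes the proof via the vertex argument.
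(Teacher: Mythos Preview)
Your proposal is correct and follows exactly the route taken in the paper: reduce to Lemma~\ref{lemma:LMI_output}, use the polytopic overapproximation $\overline{\Theta}$, and invoke multi-convexity \`a la~\cite{apkarian2000parameterized} with the enlarged curvature constraint~\eqref{eq:LMI_LPV_output2} now absorbing both the dynamic and the output cross-terms. One minor simplification: no Schur complement in the $S^{-1}(\theta)$ block is needed---since $S^{-1}(\theta)$ is affine in $\theta$ by assumption, the second partial derivative of the full $4\times 4$ block LMI with respect to $\theta_i$ is directly the $3\times 3$ block on the left of~\eqref{eq:LMI_LPV_output2} (embedded with a zero fourth row/column), and~\eqref{eq:LMI_LPV_output2} bounds it by $\Lambda_i$ without any preliminary reduction.
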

\begin{proof}
The proof is analogous to Proposition~\ref{prop:LMI_lpv} based on Lemma~\ref{lemma:LMI_output}. 
The constraint~\eqref{eq:LMI_LPV_output2} ensures multi-convexity. 
\end{proof}
\end{table*}

\end{document}